\newcommand{\BEQA}{\begin{eqnarray}}
\newcommand{\EEQA}{\end{eqnarray}}
\newtheorem{lemma}{Lemma}
\newtheorem{theorem}{{\bf Theorem}}
\newtheorem{corollary}{{\bf Corollary}}
\newtheorem{definition}{{\bf Definition}}
\newtheorem{remark}{{\bf Remark}}
\begin{document}
	\bstctlcite{IEEEexample:BSTcontrol}
	\title{Resource Sharing in the Edge: A Distributed Bargaining-Theoretic Approach}
\author{Faheem~Zafari,~\IEEEmembership{Student Member,~IEEE,}
	Prithwish~Basu,~\IEEEmembership{Senior Member,~IEEE,}\\
	Kin~K. Leung,~\IEEEmembership{Fellow,~IEEE,} 
	Jian~Li,~\IEEEmembership{Member,~IEEE,}
	Don~Towsley,~\IEEEmembership{Fellow,~IEEE,}
	and~Ananthram~Swami,~\IEEEmembership{Fellow,~IEEE, }	\thanks{}

\IEEEcompsocitemizethanks{\IEEEcompsocthanksitem Faheem Zafari and Kin K. Leung are  with the Department
	of Electrical and Electronics Engineering, Imperial College London, London,
	UK.   \protect\\
	E-mail:  \{faheem16,kin.leung\}@imperial.ac.uk
	\IEEEcompsocthanksitem  Don Towsley is with College of Information and Computer Sciences, University of Massachusetts Amherst, Amherst, MA 01003, USA.\protect\\
	E-mail:  towsley@cs.umass.edu
	\IEEEcompsocthanksitem Prithwish Basu is with BBN Technologies,  MA, USA.
	\protect\\
	Email: prithwish.basu@raytheon.com
	\IEEEcompsocthanksitem Ananthram Swami is with the U.S. Army Research Laboratory, Adelphi, MD 20783, USA.
	\protect\\
	Email: ananthram.swami.civ@mail.mil
	\IEEEcompsocthanksitem Jian Li is with  is with the Department of Electrical and Computer Engineering, Binghamton University, the State University of New York Binghamton, NY 13902, USA.
	\protect\\
	Email: lij@binghamton.edu
	
}}
\IEEEtitleabstractindextext{

\begin{abstract}
The growing demand for edge computing resources, particularly due to increasing popularity of Internet of Things (IoT), and  distributed machine/deep learning applications poses a significant challenge.  On the one hand, certain edge service providers (ESPs) may not have sufficient resources to satisfy their applications according to the associated service-level agreements. On the other hand, some ESPs may have additional unused resources.  In this paper, we propose a resource-sharing framework that allows different ESPs to optimally utilize their resources and improve the satisfaction level of applications subject to constraints such as communication cost for sharing resources across ESPs. Our framework considers that different ESPs have their own objectives for utilizing their resources, thus resulting in a multi-objective optimization problem.  We present an $N$-person \emph{Nash Bargaining Solution} (NBS) for resource allocation and sharing among ESPs with \emph{Pareto} optimality guarantee. Furthermore, we propose 
a \emph{distributed}, primal-dual algorithm to obtain the NBS by proving that the strong-duality property holds for the resultant resource sharing optimization problem.
 Using synthetic and real-world data traces, we show numerically that the proposed NBS based framework not only enhances the ability to satisfy applications' resource demands, but also improves utilities of different ESPs. 
\end{abstract}}
	\maketitle

\section{Introduction}\label{sec:intro}
\subsection{Motivation}
Edge computing has received much attention recently as it enables on-demand provisioning of computing resources for different applications and tasks at the network edge \cite{shi2014online,shi2016edge,satyanarayanan2017emergence}. One of the fundamental advantages of edge computing is that it can provide resources 
with low latencies when compared with traditional cloud computing architecture \cite{shi2016edge}. The demand for edge computing has further increased due to the advent of Internet of Things (IoT) \cite{atzori2010internet} and wide-scale use of machine and deep learning \cite{li2018learning} in different industries as  these learning-based models can  be trained and run using edge computing nodes with adequate storage and computing power \cite{wang2018edge}. 

\par A typical edge computing system consists of a large number of edge nodes  that have different types of resources. Edge service provider (ESP) earns a \emph{utility} for allocating resources to different applications and guarantees to provide  resources according to a \emph{Service Level Agreement} (SLA). However, when compared with cloud computing systems and data centers, resources in an edge setting are limited. Therefore, optimal use and allocation of these limited resources has been an active area of research. 
Even when resources are available, allocating  them to  applications with the goal of maximizing overall ESP \emph{utility} is a difficult problem. Furthermore, the aforementioned resource intensive paradigms such as deep learning, and data analytics  exacerbate the problem by challenging the scalability of traditional resource allocation techniques. \par 

ESPs typically provide enough resources to different applications at their edge nodes to  meet the peak demand. However, it is highly likely that  resources of one ESP will be over-utilized, while other ESP's resources will be under-utilized.  
For example, an ESP provisions resources to an application at an edge node that is physically closest to the requesting application. However, if the closest edge node has a resource deficit or is overloaded, the request can be satisfied through ESP's next closest edge node that  may physically  be at a distant location or deep in the network such as at the data center. This incurs high cost and  causes high latency that may not be acceptable for delay constrained applications. One possible solution is  to create a shared  resource pool with other ESPs that are  physically closer  \cite{he2018s,openedge,openfog,etsi}. Such a resource pool allows  ESPs to share and use  resources whenever needed to meet their dynamic demands. This cooperation and resource sharing among ESPs seem beneficial for them, because it is unlikely that resources of different ESPs will be   simultaneously over-utilized. Furthermore, cooperation among different service providers or vendors also exists in real life as well. For example, Amazon and Netflix compete with each other in video streaming business. However, Netflix also relies on Amazon's web services to provide its video streaming services. 
\par  
Our work has also been  motivated by military settings in which two or more  \emph{coalition partners}\footnote{Each coalition partner can be considered an ESP. } are jointly conducting a military operation using different resources from the partners. As military settings usually have strict  latency and reliability requirements, resources are placed at the edge to fulfill the aforementioned requirements. 
Hence, coalition settings can be considered a practical scenario for edge computing. In contrast with commercial edge computing settings, military settings require a higher reliability and robustness to ensure timely availability of resources. 
As seen in Figure \ref{fig:system_model}, both coalition partners $2$ and $3$ satisfy all their applications and have resource surpluses whereas partner $1$ has resource deficits when working alone. However, through cooperation among these coalition partner, $1$ satisfies its applications by using resources of other coalition partners resulting in  an improved utility for all partners. It is evident that if  different coalition partners do not share resources, different application requests cannot be satisfied. Therefore, there is a need for a framework that allows resource sharing among these coalition partners (i.e., ESPs). 
Furthermore, the framework needs to be distributed as a centralized solution may not be acceptable to the different coalition partners or ESPs because:
\begin{itemize}
	\item An adversary can target the central system  causing the entire military operation to fail.
	\item Coalition partners need to reach a consensus  for choosing the \emph{central} node that runs the resource sharing algorithm. 
	\item A centralized framework requires  a large amount of information to be transmitted to the central node. This may not be feasible or preferred by the coalition partners as certain information may be private.  
\end{itemize} 
Such a distributed resource sharing gives rise to a number of questions such as: 
 \begin{enumerate}
 	\item \emph{Should an ESP help another ESP by sharing resources?}
 	\item \emph{How should  resources be allocated to applications across different ESPs, while considering issues such as  communication cost and resource fragmentation? }
 	\item \emph{How can  ESPs share the profits of resource sharing?}
 \end{enumerate}
We  answer these questions in this paper. 
\subsection{Methodology and contributions}
In this paper, we consider a number of ESPs that share their resources and form a  resource pool to satisfy resource requests of different applications as shown in Figure \ref{fig:system_model}.
We formulate such resource sharing and allocation among ESPs as a   multi-objective optimization (MOO) problem, for which our goal is to achieve a Pareto optimal solution.  Furthermore, since Pareto solutions are spread over the \emph{Pareto frontier} \cite{cho2017survey}, choosing a single solution among them is challenging. As the \emph{Nash Bargaining Solution}  (NBS)\cite{nash1950bargaining} guarantees the provision of  a fair and Pareto optimal solution to such a MOO problem,  we develop a distributed NBS-based framework for  resource allocation and sharing here.  \emph{To the best of our knowledge, this is the first generic NBS-based framework for resource sharing and allocation among ESPs. } 

 \begin{figure}
	\centering
	\includegraphics[width=0.49\textwidth]{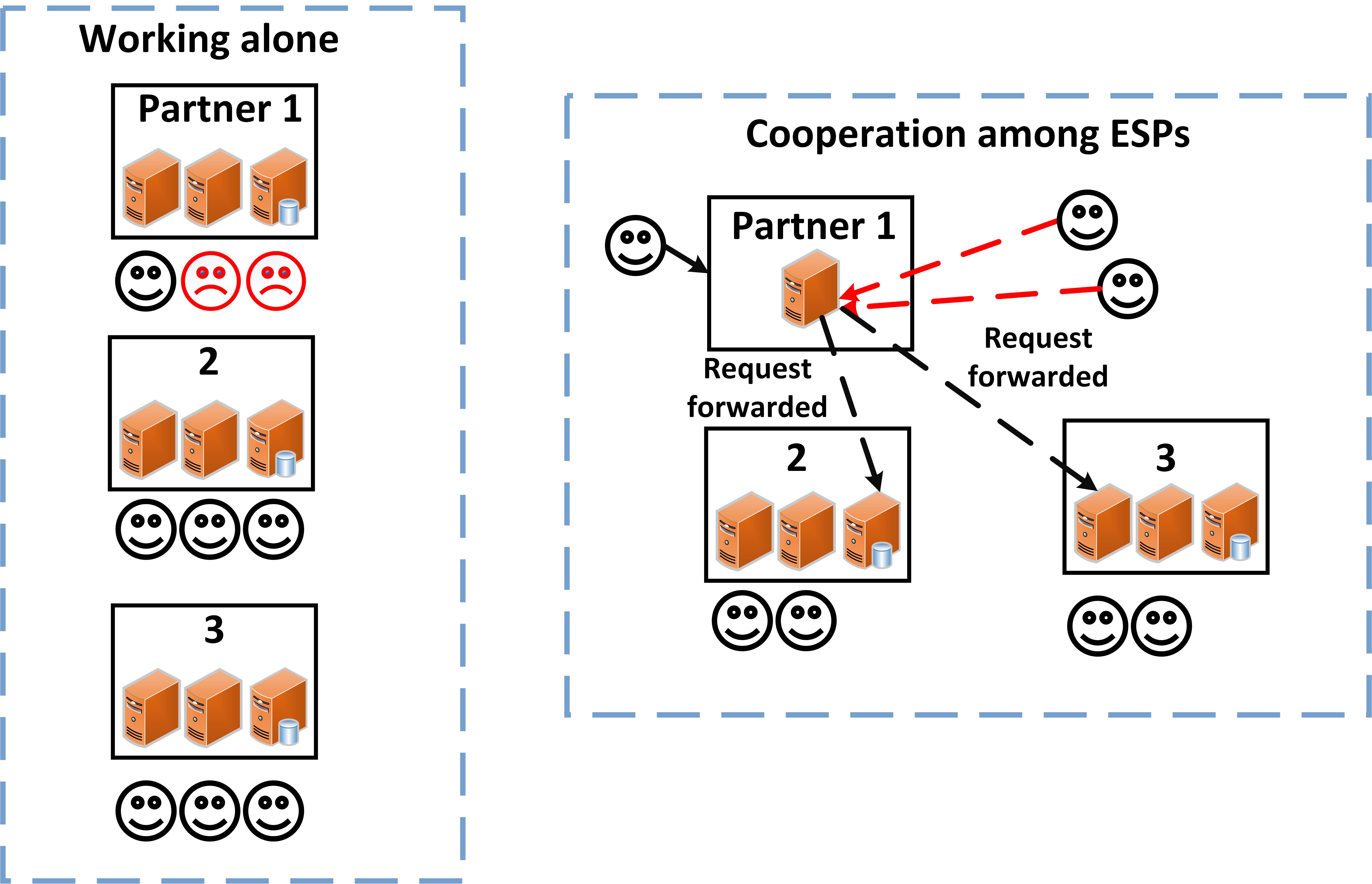}
	\vspace{0.05in}
	\caption{Cooperation among ESPs}
	\protect\label{fig:system_model}
\end{figure}
The main contributions of this paper are: 
\begin{enumerate}
  
	\item We present an NBS based  resource  sharing framework for ESPs with different objectives in allocating resources to meet the dynamic demands. Our framework also considers practical engineering constraints such as communication costs and resource fragmentation. 
\item We show that ESPs can benefit from sharing resources  with other ESPs as they can earn a higher profit and improve the average application satisfaction.    
\item We  show that  \emph{strong-duality} property holds for the formulated problem, which enables us to propose a distributed  algorithm to obtain the NBS.   
\item 	We evaluate the performance of our algorithms using  synthetic and real world traces.  Results show that resource sharing improves the utilities of ESPs, increases resource utilization and enhances average user (application) satisfaction. Furthermore, our results show that the profit sharing among  ESPs is also fair. 
\end{enumerate}
The rest of this paper is structured as follows. We present the system model in Section~\ref{sec:sysmodel}. 
In Section~\ref{sec:nbs}, we first present a primer on NBS. We then describe our proposed  NBS framework  for resource sharing and allocation. In Section~\ref{sec:dist}, we present a distributed algorithm for obtaining the NBS. In Section~\ref{sec:exp_results}, we present simulation results of the proposed framework for different settings. We describe the related work in Section \ref{sec:related},  
and conclude the paper in Section~\ref{sec:conclusion}.


\section{System Model}\label{sec:sysmodel}
Let $\mathcal{N}=\{1,2, \cdots, N\}$ be the set of all ESPs. 
We assume that each  provider has a set of $\mathcal{K}=\{1,2,\cdots, K\}$  types of resources such as communication, computation and storage resources.  
$C_{n} = \{ C_{n,1},\cdots, C_{n,K}\}$, with  $C_{n,k}$ denoting the amount of type $k$ resources available at service provider $n$. 

  Each service provider $n$ has a set of  native applications $\mathcal{M}_n= \{1,2,\cdots, M_n\}$. 
The set of all applications that request resources from all service providers  is given by $\mathcal{M}=\mathcal{M}_1\cup\mathcal{M}_2\cdots\cup \mathcal{M}_N , \;$ where we assume  $\mathcal{M}_i \cap \mathcal{M}_j=\emptyset, \; \forall i \neq j,$ i.e., each application initially  demands resources from only  its native service provider. We also define $\overline{\mathcal{M}}$ to represent  $\mathcal{{M}}\backslash \mathcal{{M}}_n$ and $\overline{\mathcal{N}}$ to represent $\mathcal{{N}}\backslash n $.  
Every ESP  $n \in \mathcal{N}$ has a request (requirement) matrix $R_{n}$,
\begin{equation}
\label{eq:R_Req}
R_{n}=\Biggl[\begin{smallmatrix}
\mathbf{r_{n}^1}\\ 
.\\ 
.\\ 
.\\
\mathbf{r_{n}^{{M}_n}}
\end{smallmatrix}\Biggr] = \Biggl[\begin{smallmatrix}
r_{n,1}^{1} & \cdots  &\cdots   & r^{1}_{n,{K}} \\ 
. & . &.  &. \\ 
. & . &.  &. \\
. & . &.  &. \\
r_{n,1}^{{M}_n}&\cdots  &\cdots   & r_{n,K}^{{M}_n}
\end{smallmatrix}\Biggr],
\end{equation}
Here  $r_{n,k}^{j}$ is the amount of resource $k$ that application $j \in \mathcal{M}_n$ requires. When an ESP is working alone (no sharing of  its resources with any other service provider),  its objective is to maximize its utility by allocating  resources to its native applications. 
 A service provider $n$ earns a  utility $u^j (x_{n,k}^j)$ by allocating $x_{n,k}^j$ amount of resource $k$ to application $j\in \mathcal{M}_n$, where the vector $\mathbf{x}_n^j= [x_{n,1}^j, x_{n,2}^j,\cdots, x_{n,K}^j]^T$. 
 Table \ref{tab:not} contains notations used throughout the paper. We present the optimization formulation for a single service provider in Section \ref{sec:prob}, followed by a formulation for the multiple service provider problem in Section \ref{sec:multP}.  


\begin{table}[]
	\centering
	\caption{List of notations used throughout the paper}
	\begin{tabular}{|p{1cm}|p{6.9cm}|}
		\hline
		\textbf{Notation}	& \textbf{Description} \\ \hline
		$\mathcal{{N}}, N, n$	& Set, number and index of ESPs  \\ \hline
		$\mathcal{K}, K, k$	& Set, number and index of resources \\ \hline
		$\mathcal{M}, M, j$	& Set, number and index of applications \\ \hline
		$\mathcal{M}_{n}$	& Set of native applications at ESP $n$  \\ \hline
		$C$	& Capacity vector of all ESPs \\ \hline
		$C_n$	& Capacity vector of ESP $n$ \\ \hline
		$C_{n,k}$	& Capacity of resource $k$ at ESP $n$ \\ \hline
		$R_{n}$	& Request matrix at ESP $n$ \\ \hline
		$r_{n,k}^{j}$	& Request of application $j$ for resource $k$ from ESP $n$ \\ \hline
		$x_{n,k}^{j}$	& Allocation decision of resource $k$ for application $j$ at ESP $n$ \\ \hline
	$\mathbf{x}_{n}^{j}$	& Allocation decision vector for application $j$ at ESP $n$ when working alone, i.e., $\mathbf{x}_{n}^{j}=[x_{n,1}^j, \cdots, x_{n,K}^j]^T$  \\ \hline

	$\mathbf{X}_n$ & Allocation decision for ESP $n$ in the  resource sharing case \\ \hline 
	$\mathbf{X}$ & Allocation decision for the entire set of ESPs  \\ \hline 
			$u_{}^j({x}_{n,k}^j)$	& Utility ESP $n$ earns  by allocating resource $k$ to application $j$ \\ \hline
		$u_{}^j(\mathbf{x}_{n}^j)$	& Utility ESP  $n$ earns  by allocating vector of  resources to application $j$ \\ \hline
	\end{tabular}
	\label{tab:not}
\end{table}

\subsection{Problem Formulation for Single Service Provider } \label{sec:prob}
 We  first present the resource allocation problem for a stand-alone single service provider (i.e., no resource sharing with other service providers). 
For a single ESP $n \in \mathcal{N}$, the allocation decision  consists of  vectors $\mathbf{x}_n^1, \cdots, \mathbf{x}_n^{M_n}$. The optimization problem is: 
\begin{subequations}\label{eq:opt_single}
	\begin{align}
	\centering
\max_{\mathbf{x}_n^1, \cdots, \mathbf{x}_n^{M_n}}\quad& \sum_{\substack{j \in \mathcal{M}_n,\\k \in \mathcal{K}}}u^j({x}_{n,k}^j), \label{eq:objsingle}\\
	\text{s.t.}\quad & \sum_{j} x_{n,k}^{j}\leq C_{n,k}, \quad \forall k \in \mathcal{K}, \label{eq:singlefirst} \displaybreak[0]\\
	& x_{n,k}^{j} \leq r^{j}_{n,k}, \quad \forall\; j\in \mathcal{M}_n, k \in \mathcal{K}, \label{eq:singlesecond} \displaybreak[1]\\
	&  x_{n,k}^{j} \geq 0, \quad \forall\; j\in \mathcal{M}_n, k \in \mathcal{K}. \label{eq:singlethird} \displaybreak[2]
	\end{align}
\end{subequations}
The goal of a single service provider in solving this single objective optimization (SOO) problem, as mentioned earlier,  is   to  maximize its utility by appropriately allocating  resources. 
The first constraint  \eqref{eq:singlefirst} indicates that  allocated resources cannot exceed capacity. The second constraint   \eqref{eq:singlesecond} reflects that allocated resources should not exceed the requested amounts.  The last constraint, \eqref{eq:singlethird}  says the allocation  cannot be negative. However, it is possible that a service provider $n$ may 
earn a larger utility by providing its resources to applications of other service providers or it may not have sufficient resources to satisfy  requests of all its native applications. On the other hand, there may be another service provider $m\in \mathcal{{N}}\backslash n$ that may have a surplus of resources, which can be ``rented" by service provider $n$. 
Below, we discuss  resource sharing among these service providers.

\subsection{Multiple Service Providers Problem Formulation}\label{sec:multP}
Allowing resource sharing among  service providers, while considering their objectives, could improve resource utilization and  application satisfaction. 
Let $D^j(x_{n,k}^j)$ denote the communication cost of serving application $j \in  \mathcal{{M}}_m$ at service provider $n$ rather than at its native service provider $m$. 
By sharing its resources, ESP $n$ earns a utility,
\begin{equation}\label{eq:util}
u^j (\sum_{m \in \overline{\mathcal{N}}}{x}_{m,k}^j+x_{n,k}^j)-u^j(\sum_{m \in \overline{\mathcal{N}}}{x}_{m,k}^j)-D^j(x_{n,k}^j),
\end{equation}
after allocating  $x_{n,k}^j$ for application $j\in \mathcal{{M}}_m$, i.e., the net utility ESP $n$ earns is calculated as the \emph{differential utility} ($u^j (\sum_{m \in \overline{\mathcal{N}}}{x}_{m,k}^j+x_{n,k}^j)-u^j(\sum_{m \in \overline{\mathcal{N}}}{x}_{m,k}^j)$) earned due to providing  $x_{n,k}^k$ amount of resources minus the communication cost between the native and non-native ESP. 
  We assume that  $ u^j (x_{n,k}^j)=u^j (x_{m,k}^j)$ when $x_{n,k}^j=x_{m,k}^j$. 
   We also let
   \begin{equation*}
 \mathbf{x}^j= [\sum_{n \in \mathcal{{N}}}x_{n,1}^j, \sum_{n \in \mathcal{{N}}}x_{n,2}^j,\cdots, \sum_{n \in \mathcal{{N}}}x_{n,K}^j]^T,
   \end{equation*}
 i.e., the total resource allocated to any application $j \in \mathcal{M}$ is the sum of resources allocated to application $j$ from all service providers.  The resource sharing and allocation algorithm, based on resource requests and  capacities of service providers, has to make an allocation that optimizes  utilities of all  service providers $n \in \mathcal{N}$ and satisfy user requests as well.  The allocation decision is given by
 $\mathbf{X}=\{\mathbf{X}_1, \mathbf{X}_2, \cdots, \mathbf{X}_N\}$, where $\mathbf{X}_n,\; \forall n \in \mathcal{N}$ is given by: 

\begin{equation}
\label{eq:X}
\mathbf{X}_{n}=\Biggl[\begin{smallmatrix}
\mathbf{x}^{1}_n\\ 
.\\ 
.\\ 
.\\
\mathbf{x}^{|\mathcal{{M}}|}_n
\end{smallmatrix}\Biggr] = \Biggl[\begin{smallmatrix}
x_{n,1}^{1} & \cdots  &\cdots   & x^{1}_{n,{K}} \\ 
. & . &.  &. \\ 
. & . &.  &. \\
. & . &.  &. \\
x_{n,1}^{|\mathcal{{M}}|}&\cdots  &\cdots   & x_{n,K}^{|\mathcal{{M}}|}
\end{smallmatrix}\Biggr].
\end{equation}
Each service provider aims to  maximize the sum of utilities by  allocating its resources to its native applications, and  allocating resources  to  applications belonging to other service providers. 
Each provider $n \in \mathcal{{N}}$ solves the following multi-objective optimization problem.

\begin{subequations}\label{eq:opt_higher}
	\begin{align}
	\max_{\mathbf{X}_n}\quad& \sum_{\substack{j \in \mathcal{M}_n,\\k \in \mathcal{K}}}\Big(u^j(\sum_{m \in \overline{\mathcal{N}}}{x}_{m,k}^j+{x}_{n,k}^j)- u^j(\sum_{m \in \overline{\mathcal{N}}}{x}_{m,k}^j) \Big) \nonumber \\
	&+\Big(\sum_{\substack{l \in \overline{\mathcal{M}},\\k \in \mathcal{K}}}\big( u^l(\sum_{m \in \overline{\mathcal{N}}}{x}_{m,k}^l+{x}_{n,k}^l)-u^l(\sum_{m \in \overline{\mathcal{N}}}{x}_{m,k}^l)\nonumber\\
	&- D^l({x}_{n,k}^l)\big)\Big), \label{eq:obj}\\ 
	\text{s.t.}\quad 
& \sum_{j} x_{n,k}^{j}\leq C_{n,k}, \quad \forall\;  k \in \mathcal{K}, n \in \mathcal{N}, \label{eq:obj1} \displaybreak[0]\\
	& \sum_{m \in \mathcal{N}}x_{m,k}^{j} \leq r_{n,k}^{j}, \quad \forall\; j\in \mathcal{M}, k \in \mathcal{K}, n \in \mathcal{N},  \label{eq:obj2}\displaybreak[1]\\
	&  x_{n,k}^{j} \geq 0, \quad \forall\; j\in \mathcal{M}, k \in \mathcal{K}, n \in \mathcal{N}, \label{eq:obj3} \displaybreak[2]\\
	& u^l(\sum_{m \in \overline{\mathcal{N}}}{x}_{m,k}^l+{x}_{n,k}^l)-
	u^l(\sum_{m \in \overline{\mathcal{N}}}{x}_{m,k}^l)- \nonumber \\
	& D^l({x}_{n,k}^l)\geq0, \forall l \in \mathcal{M}\backslash\mathcal{M}_n, k \in \mathcal{K}, n \in \mathcal{N}\label{eq:obj4} .
	\end{align}
\end{subequations}
The first summation term in \eqref{eq:obj} represents the utility earned by an ESP providing resources to the native applications, whereas the second summation term describes the utility earned by providing resources to non-native applications. 
 Note that constraint \eqref{eq:obj1} indicates that the total allocated resources cannot exceed the resource capacity of the service providers. \eqref{eq:obj2} states that the total amount of resources allocated to any application using the resource-sharing framework cannot exceed the amount of  requested resources. \eqref{eq:obj3}  says that the resource allocation cannot be negative  whereas \eqref{eq:obj4} indicates that the incremental increase in utility earned by providing resources to non-native applications should be non-negative. 
 
 \subsection{Assumptions}\label{sec:assumption}
 In our model, we assume that  each utility is a  concave injective function  for which the inverse of the first derivative exists, such as $(1-e^{-x})$. 
Strictly speaking,  our centralized NBS framework requires only concave injective utility functions\cite{yaiche2000game}. However,  the existence of the inverse of the first derivative is required for the distributed NBS (see details in Section \ref{sec:nbs}).  The communication cost is a convex function, hence the objective function in \eqref{eq:opt_higher} is  concave. 
All resources are fully utilized in the  optimal solution.  However, there are enough resources to provide a positive utility to all  service providers when sharing resources.  This is a realistic assumption as the demand for resources is usually more than the supply.  
 Furthermore, we assume that when ESPs share resources,  there exist solutions that are better than when they are all  working alone. This assumption can be relaxed, that is, if certain ESPs cannot improve their utility using the bargaining solution, they will not  participate in the resource sharing framework. However, the framework can still be used for the remaining  ESPs.

 \subsection{Choice of  utility function}
  While our framework works with any utility that satisfies the conditions in Section \ref{sec:assumption}, choosing a suitable utility function along with the communication cost can minimize \emph{resource fragmentation}\footnote{We define resource fragmentation as the process in which  resources provided to an application $j$ are split across multiple ESPs rather than a single ESP. }. 
 Generally, a concave utility has a steeper slope at start that becomes flatter as more resources are allocated, i.e., the rate of increase in the payoff for allocating  resources reduces with increase in the amount of allocated resources.  This results in  an ESP providing a fraction of originally requested resources to an application and keeping the remaining resources for other applications. Hence, the application  does not get all the resources it needed, and has to ask another ESP for more resources, resulting in resource fragmentation. 
 For example, assume that two applications require $4$ units of a particular resource  from ESP $n$ that only has $3$ units available. Due to the nature of many concave utilities, ESP $n$ for maximizing its utility will provide part of its resources to one application and the remaining resources to the other application rather than providing all $3$ available units to one application and borrowing resources for the other application. This causes resource fragmentation, i.e., both  applications received only a part of the required resources and they will need to obtain the remaining amount from other ESP(s). To avoid such problems, we propose that the utility function should consider:
  \begin{itemize}
  	\item Minimum acceptable amount of resources: The ESPs should earn either zero utility or a negative utility if the resources provided are not within $\delta$ units of the  requested resource $r_{n,k}^j$. Rather than using a utility function such as  $1-e^{-(x_{n,k}^j)}$ that pays  the ESP even when a small amount of resources are provided, it is better to use $1-e^{-(x_{n,k}^j-r_{n,k}^j +\delta)}$  that becomes positive only when the allocation $x_{n,k}^j$ is within $\delta_t$ units of the requested resource $r_{n,k}^j$. Such a utility along  with the communication cost helps minimize the aforementioned fragmentation problem. 
  \end{itemize}


\section{NBS for Resource Sharing among CSPs}\label{sec:nbs}
We first present an introduction to NBS and then discuss our proposed NBS based resource sharing framework. 
\subsection{Primer on Nash Bargaining Solution (NBS)}\label{sec:prelim}
We use a two-player game as a toy example to introduce NBS.  Consider two players $1,$ and  $2$ that need to reach an agreement (e.g., resource allocation decision) in an outcome space $\mathcal{A}\subseteq \mathbb{R}^2$ \cite{han2012game}. Both players have a utility given by $u_1$ and $u_2$ over the space $\mathcal{A} \cup \{D\}$ where $D$ specifies a \emph{disagreement outcome} for players in the event of a disagreement, i.e., when  two players cannot reach an agreement. Let $\mathcal{S}$ be the set of all possible utilities that both  players can achieve: 

\begin{align}
\label{eq:possibleut}
\mathcal{S}= \{(u_1(a_1),u_2(a_2)) | (a_1,a_2) \in \mathcal{A} \}
\end{align}

We also define $d=(d_1, d_2)$, where $d_1=u_1(D)$ and $d_2=u_2(D)$, as the payoff each player receives at the disagreement point. We define the bargaining problem  as the pair $(\mathcal{S},d)$ where $\mathcal{S} \subset \mathbb{R}^2$ and $d \in \mathcal{S}$ such that 
\begin{itemize}
	\item $\mathcal{S}$ is a convex and compact set;
	\item There exists $s \in \mathcal{S}$ such that $s > d$.
\end{itemize}
In NBS, the goal is to obtain a function $f(\mathcal{S},d)$ that provides  a unique outcome in $ \mathcal{S}$ for every bargaining problem $(\mathcal{S},d)$. Nash studied the possible outcomes (agreements) that players can reach whereas the agreements, along with the Pareto optimality, must also  satisfy the following set of  axioms (also called \emph{fairness} axioms) \cite{han2012game}:
\begin{enumerate}
	\item \textbf{Symmetry:}  The bargaining solution will not discriminate among  players if players are indistinguishable, i.e., players have identical utilities. 
	\item \textbf{Invariance to equivalent utility representation:} If a bargaining problem  $(\mathcal{S},d)$ is transformed into another bargaining problem  $(\mathcal{S}',d')$ where $s_i'=\gamma_is_i+\zeta_i$ and $d_i'=\gamma_id_i+\zeta_i$, $\gamma_i >0$, then $f( \mathcal{S}',d')=\gamma_if(\mathcal{S},d)+\zeta_i$
	\item \textbf{Independence of irrelevant alternatives:} For any two bargaining problems  $(\mathcal{S},d)$ and  $(\mathcal{S}',d)$ where $\mathcal{S}'\subseteq \mathcal{S}$, if $f(\mathcal{S},d) \in \mathcal{S}'$, then $f(\mathcal{S}',d)=f(\mathcal{S},d)$. 
\end{enumerate}
\cite{nash1950bargaining} shows that there is a unique bargaining solution that satisfies above axioms. We present it in the following theorem. 
\begin{theorem}{\cite{han2012game}}
	There exists a unique solution satisfying the aforementioned axioms and this solution is the pair of utilities $(s_1^*,s_2^*) \in \mathcal{S}$ that solves the following optimization problem:
	\begin{align}
	\label{eq:nashProb}
	\max_{s_1,s_2}\;(s_1-d_1)(s_2-d_2),\; s.t. (s_1,s_2)\in \mathcal{S}, \; (s_1,s_2)\geq (d_1,d_2).
	\end{align}\
\end{theorem}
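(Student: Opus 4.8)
The plan is to prove the theorem in two stages: first establish that the Nash product in \eqref{eq:nashProb} admits a unique maximizer, and then show by an axiomatic argument that this maximizer is the only point consistent with the four stated properties. For the first stage, I would observe that the feasible set $\{s \in \mathcal{S} : s \geq d\}$ is nonempty (by the assumption that some $s > d$ exists), convex, and compact, so the continuous objective $g(s) = (s_1 - d_1)(s_2 - d_2)$ attains a maximum on it by the extreme value theorem. Uniqueness follows because $\log g$ is strictly concave on the region where both factors are positive (its Hessian is diagonal with strictly negative entries), so $g$ is strictly log-concave there; two distinct maximizers would then place their midpoint, which lies in $\mathcal{S}$ by convexity, at a strictly higher value, a contradiction. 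Denote this maximizer by $z = f(\mathcal{S}, d)$.

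Second, I would verify that $z$ satisfies each axiom. Pareto optimality is immediate, since increasing either coordinate strictly increases $g$. Symmetry follows from uniqueness: if $\mathcal{S}$ is symmetric and $d_1 = d_2$, then swapping the coordinates of $z$ produces another maximizer, forcing $z_1 = z_2$. Invariance holds because the affine map $s_i' = \gamma_i s_i + \zeta_i$ with $\gamma_i > 0$ multiplies the objective by $\gamma_1 \gamma_2 > 0$, leaving the argmax unchanged up to the same transformation. Independence of irrelevant alternatives is clear: if the maximizer over $\mathcal{S}$ happens to lie in a subset $\mathcal{S}' \subseteq \mathcal{S}$, it remains the maximizer over the smaller set.

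The hard part is the converse: showing that any solution $\phi$ satisfying the four axioms must coincide with $z$. The key idea is to exploit invariance to normalize the problem, applying the affine map that sends $d$ to the origin and $z$ to $(1,1)$. Because $(1,1)$ maximizes $s_1 s_2$ over the transformed set, the tangent line $s_1 + s_2 = 2$ to the hyperbola $s_1 s_2 = 1$ at $(1,1)$ must support the set; indeed, any point $s$ with $s_1 + s_2 > 2$ would, along the segment from $(1,1)$ toward $s$, yield a strictly larger product, contradicting optimality. I would then enclose the normalized $\mathcal{S}$ in a symmetric, compact, convex set $\mathcal{T}$ (for instance the triangle cut out by $s_1 + s_2 \leq 2$ together with suitably large symmetric lower bounds) that has $(1,1)$ on its Pareto boundary. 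By symmetry and Pareto optimality, $\phi(\mathcal{T}, 0) = (1,1)$; by independence of irrelevant alternatives applied to $\mathcal{S} \subseteq \mathcal{T}$ with $(1,1) \in \mathcal{S}$, we obtain $\phi(\mathcal{S}, 0) = (1,1) = z$. Undoing the normalization via invariance then gives $\phi(\mathcal{S}, d) = z$ for the original problem. I expect the main obstacle to be this final step: one must construct the enclosing symmetric set carefully and confirm that the supporting-line property genuinely follows from optimality of the Nash point, since this is precisely where the geometry of the product objective does the essential work.
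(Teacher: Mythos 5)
The paper itself offers no proof of this theorem: it is stated with a citation to \cite{han2012game} (ultimately to \cite{nash1950bargaining}), so there is no in-paper argument to compare yours against, and the relevant benchmark is the classical proof in those references. Judged on its own, your proposal is correct and is essentially that classical proof: existence by compactness and continuity, uniqueness by strict concavity of $\log\bigl((s_1-d_1)(s_2-d_2)\bigr)$ on the region where both factors are positive, direct verification of the axioms for the Nash point, and the converse via the invariance-based normalization sending $d$ to $(0,0)$ and the Nash point to $(1,1)$, the supporting-line property $s_1+s_2\le 2$ of the normalized feasible set, a symmetric compact convex enclosure $\mathcal{T}$, and independence of irrelevant alternatives. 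Two details should be made explicit in a full write-up. First, in the uniqueness step, note that the optimal value is strictly positive — this is exactly where the assumption that some $s>d$ exists enters — so every maximizer lies in the open region where both factors are positive; this is needed both for the strict log-concavity argument and for your Pareto-optimality claim that increasing a coordinate strictly increases the product. Second, the enclosing set $\mathcal{T}$ must itself constitute a valid bargaining problem with disagreement point $0$ (convex, compact, symmetric, containing the normalized $\mathcal{S}$ and hence $0$), and you need that $(1,1)$ is the \emph{unique} point of $\mathcal{T}$ that is symmetric and Pareto optimal; your triangle construction delivers this because the Pareto frontier of $\mathcal{T}$ crosses the diagonal precisely on the line $s_1+s_2=2$, i.e., at $(1,1)$. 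With those two points spelled out, the argument is complete and matches the standard proof the paper is implicitly relying on.
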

The solution of \eqref{eq:nashProb} is the NBS. 
The above framework can be extended to $N$ players  by allowing $\mathcal{S}$ to be an $N$-dimensional space \cite{harsanyi1963simplified}. For this case, the bargaining problem $(\mathcal{S},d)$, with $d=(d_1,d_2,\cdots, d_N)$ as the disagreement point, becomes the unique solution of the optimization problem below.
	\begin{align}\vspace{-0.1in}
\label{eq:nashProbNplayer}
\max_{s_1,\cdots,s_N}&\;\prod_{n=1}^{N}(s_n-d_n),\; \nonumber \\
s.t.\quad  &(s_1,\cdots,s_N)\in \mathcal{S}, \nonumber \\
& (s_1,\cdots,s_N)\geq (d_1,\cdots, d_N).
\end{align}
Solving \eqref{eq:nashProb} is easier compared to the $N-$player bargaining problem in \eqref{eq:nashProbNplayer}\cite{han2012game}. 
In this paper, we transform our problem into an equivalent convex problem that is comparatively easier to solve.  
 
\begin{remark}
  As each player in an  $N$-player bargaining game has a particular objective to optimize, the resulting problem is multi-objective,  where the goal is to obtain a Pareto optimal solution. NBS is a fair and Pareto optimal solution for such MOO problems, provided that the fairness axioms are satisfied.  
\end{remark}

\subsection{Proposed Framework}
As mentioned earlier, NBS is  a Pareto optimal and fair solution in settings that involve different players (ESPs in our case) where each player has an objective to optimize. Therefore, it can be used to solve our MOO problem in \eqref{eq:opt_higher} as we have different ESPs  that need to optimize their objectives and improve their utilities over what they would receive  working alone. 
 We first specify the disagreement point for our bargaining problem. If the ESPs  cannot come to an agreement, they can all start working alone.  Hence the disagreement point  is the solution to the SOO problem given in \eqref{eq:opt_single} for all ESPs. Let us denote the solution to the SOO problem by $d_n^0, \forall\; n \in \mathcal{N}$.   In the cooperative setting\footnote{Service providers share resources among each other.}, the utility (represented by $s_n$ in \eqref{eq:nashProbNplayer})  for an ESP $n \in \mathcal{N}$  is given by: 
 \begin{align}
U= &\sum_{\substack{j \in \mathcal{M}_n,\\k \in \mathcal{K}}}\Big(u^j(\sum_{m \in \overline{\mathcal{N}}}{x}_{m,k}^j+{x}_{n,k}^j)- u^j(\sum_{m \in \overline{\mathcal{N}}}{x}_{m,k}^j) \Big) \nonumber \displaybreak[0] \\
&\hspace*{-0.2in}+\Big(\sum_{\substack{l \in \overline{\mathcal{M}},\\k \in \mathcal{K}}}\big( u^l(\sum_{m \in \overline{\mathcal{N}}}{x}_{m,k}^l+{x}_{n,k}^l)-u^l(\sum_{m \in \overline{\mathcal{N}}}{x}_{m,k}^l)- \displaybreak[1]\nonumber\\
&\hspace*{-0.1in} D^l({x}_{n,k}^l)\big)\Big). \nonumber 
 \end{align} 
  Below, we  present the centralized NBS algorithm. 
 \subsubsection{Centralized NBS}
 We first present the optimization problem to obtain NBS for our $N-$ESP bargaining game  \cite{harsanyi1963simplified} and then present its equivalent problem \cite{yaiche2000game} that is computationally efficient to solve. 
 \begin{theorem} The NBS  for the MOO optimization problem in \eqref{eq:opt_higher} can be  obtained by solving the following optimization problem:
 \begin{subequations}\label{eq:nashProbNplayermod}
 	\begin{align}
  	\max_{\mathbf{X}}\;& \prod_{n=1}^{N}\hspace*{-0.05in}\bigg(\sum_{\substack{j \in \mathcal{M}_n,\\k \in \mathcal{K}}}\Big(u^j(\sum_{m \in \overline{\mathcal{N}}}{x}_{m,k}^j+{x}_{n,k}^j)-u^j(\sum_{m \in \overline{\mathcal{N}}}{x}_{m,k}^j) \Big)  \nonumber \\
  	&+\Big(\sum_{\substack{l \in \overline{\mathcal{M}},\\k \in \mathcal{K}}}\big( u^l(\sum_{m \in \overline{\mathcal{N}}}{x}_{m,k}^l+{x}_{n,k}^l)-u^l(\sum_{m \in \overline{\mathcal{N}}}{x}_{m,k}^l)\nonumber\\
  	&- D^l({x}_{n,k}^l)\big)\Big)-d_n^0
  	\bigg),\label{eq:objmain} \\
 	\text{s.t.}\quad & Constraints \; in \; \eqref{eq:obj1}-\eqref{eq:obj4}, \nonumber \\
 	&\sum_{\substack{j \in \mathcal{M}_n,\\k \in \mathcal{K}}}\Big(u^j(\sum_{m \in \overline{\mathcal{N}}}{x}_{m,k}^j+{x}_{n,k}^j)-u^j(\sum_{m \in \overline{\mathcal{N}}}{x}_{m,k}^j) \Big)  \nonumber \\
 	&+\Big(\sum_{\substack{l \in \overline{\mathcal{M}},\\k \in \mathcal{K}}}\big( u^l(\sum_{m \in \overline{\mathcal{N}}}{x}_{m,k}^l+{x}_{n,k}^l)-u^l(\sum_{m \in \overline{\mathcal{N}}}{x}_{m,k}^l)\nonumber\\
 	&- D^l({x}_{n,k}^l)\big)\Big)> d_n^0, \quad \forall \; n \in \mathcal{N}.\label{eq:disagreemt}
 	\end{align}
 \end{subequations}
 \end{theorem}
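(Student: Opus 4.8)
The plan is to read \eqref{eq:opt_higher} as an $N$-player bargaining game and then invoke the $N$-player NBS characterization \eqref{eq:nashProbNplayer} verbatim, so that the work reduces to checking that this game is a legitimate bargaining problem. First I would make the bargaining problem $(\mathcal{S},d)$ explicit: the players are the ESPs $n\in\mathcal{N}$; the disagreement point is $d=(d_1^0,\dots,d_N^0)$, where each $d_n^0$ is the stand-alone utility obtained from the SOO problem \eqref{eq:opt_single}; and the utility set is $\mathcal{S}=\{(U_1(\mathbf{X}),\dots,U_N(\mathbf{X})) : \mathbf{X}\ \text{is feasible}\}$, with $U_n(\mathbf{X})$ the cooperative-setting utility $U$ of ESP $n$ defined just above the theorem and ``feasible'' meaning that $\mathbf{X}$ obeys \eqref{eq:obj1}--\eqref{eq:obj4}. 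Once I show that $(\mathcal{S},d)$ satisfies the two requirements of a valid bargaining problem from Section~\ref{sec:prelim} --- that $\mathcal{S}$ is convex and compact and that some $s\in\mathcal{S}$ satisfies $s>d$ --- the uniqueness theorem behind \eqref{eq:nashProbNplayer} applies and identifies the NBS as the unique maximizer of $\prod_{n=1}^{N}(s_n-d_n)$ on $\mathcal{S}$.

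Next I would verify the two requirements. Compactness is the easy half: constraints \eqref{eq:obj1}--\eqref{eq:obj3} confine every $x_{n,k}^j$ to the interval $[0,\min(C_{n,k},r_{n,k}^j)]$ and define a closed set, so the feasible allocation region is closed and bounded, and since each $U_n$ is continuous its image $\mathcal{S}$ is compact. For convexity I would appeal to the assumptions of Section~\ref{sec:assumption}: each $u^j$ is concave and each $D^l$ convex, which makes the objective \eqref{eq:obj}, i.e.\ each $U_n$, concave in the decision variables, and makes the differential-utility constraint \eqref{eq:obj4} carve out a convex region. Convexity of $\mathcal{S}$ then follows by passing to the comprehensive hull $\{\mathbf{s} : s_n\le U_n(\mathbf{X}),\ \mathbf{X}\ \text{feasible}\}$, which is convex whenever the feasible set is convex and every $U_n$ is concave; since the NBS is attained on the Pareto boundary of $\mathcal{S}$, replacing $\mathcal{S}$ by this hull is without loss. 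The strict-improvement requirement $s>d$ is exactly the standing assumption of Section~\ref{sec:assumption} that resource sharing admits a solution strictly better than working alone for every ESP.

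Finally, with $(\mathcal{S},d)$ certified as a bargaining problem, I would apply \eqref{eq:nashProbNplayer} and substitute $s_n\mapsto U_n(\mathbf{X})$ and $d_n\mapsto d_n^0$: the product objective becomes precisely \eqref{eq:objmain}, the membership $s\in\mathcal{S}$ together with the definition of feasibility reproduces constraints \eqref{eq:obj1}--\eqref{eq:obj4}, and the strict inequality $s_n>d_n$ becomes \eqref{eq:disagreemt}, yielding \eqref{eq:nashProbNplayermod}. I expect the convexity of $\mathcal{S}$ to be the main obstacle, because the per-ESP utility $U_n$ contains the differential term $u^j(\sum_{m\in\overline{\mathcal{N}}}x_{m,k}^j+x_{n,k}^j)-u^j(\sum_{m\in\overline{\mathcal{N}}}x_{m,k}^j)$, which is a difference of concave functions and therefore not automatically concave once all ESPs' allocations vary jointly. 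The delicate point is thus establishing (or explicitly invoking from Section~\ref{sec:assumption}) the concavity of each $U_n$ in $\mathbf{X}$, since it is precisely this concavity, paired with the convex feasible set, that secures convexity of the bargaining set and hence the applicability of the Nash characterization.
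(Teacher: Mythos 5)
Your proposal is correct and follows essentially the same route as the paper: identify $(\mathcal{S},d)$ with the cooperative utilities and the stand-alone SOO values $d_n^0$, verify the two conditions of a valid bargaining problem, and invoke the $N$-player Nash characterization \eqref{eq:nashProbNplayer}. The difference is one of rigor, and it favors you. The paper's proof is three sentences: it asserts that the \emph{feasible set} of allocations is convex and compact (intersection of convex constraint sets) and that the standing assumptions guarantee points strictly better than $d_n^0$, then concludes. You correctly observe that Nash's theorem requires convexity and compactness of the \emph{utility set} $\mathcal{S}$, not merely of the feasible allocation region --- a distinction the paper elides --- and you supply the missing bridge: compactness of $\mathcal{S}$ as the continuous image of a compact set, and convexity via the comprehensive hull together with concavity of each $U_n$ over the convex feasible set. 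You also flag the genuine delicate point, namely that the differential terms $u^j(\sum_{m\in\overline{\mathcal{N}}}x_{m,k}^j+x_{n,k}^j)-u^j(\sum_{m\in\overline{\mathcal{N}}}x_{m,k}^j)$ are differences of concave functions and so are not automatically concave when all ESPs' allocations vary jointly; the paper deals with this only implicitly, by declaring in Section~\ref{sec:assumption} that the objective in \eqref{eq:opt_higher} is concave. So long as you invoke that assumption explicitly (rather than attempt to derive concavity, which would fail in general), your argument closes cleanly and is a strictly more careful version of the paper's own.
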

\begin{proof}
The feasible set for the above optimization problem is convex and compact, since all constraints are convex and intersection of convex sets is a convex set. Furthermore, based on our assumptions, there exist solutions (allocation and sharing decisions) that  provide better utility than the disagreement point $d_n^0, \forall\; n \in \mathcal{N}$. Hence, the solution of the optimization problem in \eqref{eq:nashProbNplayermod} is the NBS.
\end{proof}
However, solving \eqref{eq:nashProbNplayermod} is computationally complex and the problem is not always convex.  Therefore, there is a need for an efficient method to obtain the NBS. 
Toward this goal,  we transform the problem \eqref{eq:nashProbNplayermod} into an equivalent problem as proposed in \cite{yaiche2000game}. 

\begin{corollary}
The NBS for \eqref{eq:opt_higher} is obtained by solving the following optimization problem:
	\begin{align}\label{eq:objmain2} 
	\max_{\mathbf{X}}\;& \sum_{n=1}^{N}\ln\bigg(\sum_{\substack{j \in \mathcal{M}_n,\\k \in \mathcal{K}}}\Big(u^j(\sum_{m \in \overline{\mathcal{N}}}{x}_{m,k}^j+{x}_{n,k}^j)- \nonumber \\
	&u^j(\sum_{m \in \overline{\mathcal{N}}}{x}_{m,k}^j) \Big) +\Big(\sum_{\substack{l \in \overline{\mathcal{M}},\\k \in \mathcal{K}}}\big( u^l(\sum_{m \in \overline{\mathcal{N}}}{x}_{m,k}^l+{x}_{n,k}^l)\nonumber\\
	&-u^l(\sum_{m \in \overline{\mathcal{N}}}{x}_{m,k}^l)- D^l({x}_{n,k}^l)\big)\Big)-d_n^0
	\bigg),\\
	\text{s.t.}\quad &  Constraints \; in \; \eqref{eq:obj1}-\eqref{eq:obj4}, \; \eqref{eq:disagreemt}. \nonumber 
	\end{align}
\end{corollary}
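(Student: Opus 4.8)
The plan is to deduce the corollary directly from the preceding theorem by a strictly monotone change of objective. That theorem already establishes that the NBS for \eqref{eq:opt_higher} is the unique maximizer of the product problem \eqref{eq:nashProbNplayermod} over the feasible set defined by \eqref{eq:obj1}--\eqref{eq:obj4} together with the strict bargaining constraint \eqref{eq:disagreemt}. I would therefore only need to show that replacing the product objective $\prod_{n=1}^{N}(s_n-d_n^0)$ by the sum $\sum_{n=1}^{N}\ln(s_n-d_n^0)$ leaves the argmax unchanged, so that the two programs share the same solution, namely the NBS.

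First I would record that \eqref{eq:disagreemt} forces every bargaining surplus $s_n-d_n^0$ to be strictly positive at every feasible point, where $s_n$ denotes the utility $U$ of ESP $n$. Consequently the product $\prod_{n}(s_n-d_n^0)$ is strictly positive over the feasible region, and the logarithm of each factor is well defined. Since $\ln$ is strictly increasing, $\ln\big(\prod_{n}(s_n-d_n^0)\big)=\sum_{n}\ln(s_n-d_n^0)$ is a strictly monotone transformation of the product objective, and a strictly increasing transformation never alters the set of maximizers over a fixed feasible set. Hence the maximizer of \eqref{eq:objmain2} coincides with the maximizer of \eqref{eq:nashProbNplayermod}, which is precisely the NBS.

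Second, I would verify that \eqref{eq:objmain2} is in fact a concave maximization, which is the practical payoff of the transformation. By the assumptions in Section~\ref{sec:assumption}, each utility $u^j$ is concave and each communication cost $D^l$ is convex, so each surplus $s_n-d_n^0$ is a concave function of $\mathbf{X}$. Composing the concave, nondecreasing map $\ln$ with a concave function yields a concave function, and a sum of concave functions is concave; therefore the objective of \eqref{eq:objmain2} is concave. Combined with the convexity and compactness of the feasible set established in the proof of the theorem, this shows the problem is a concave program with a unique solution.

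The main obstacle is not analytic depth but making the well-definedness of the logarithm airtight: the whole argument hinges on the strict inequality in \eqref{eq:disagreemt}, which keeps every factor bounded away from zero and guarantees that $\ln(s_n-d_n^0)$ is finite throughout the feasible region. I would take care to note that the concavity of the composition relies specifically on $\ln$ being both increasing and concave (the standard scalar composition rule), since replacing $\ln$ by a merely monotone but non-concave transformation would still preserve the optimizer but would destroy the convexity that the distributed primal-dual algorithm in Section~\ref{sec:dist} exploits.
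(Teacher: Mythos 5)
Your proof is correct, and it rests on the same transformation idea as the paper, but you carry it out in a self-contained way where the paper delegates to a citation. The paper's proof is essentially a verification of hypotheses: it checks that the ESP utilities are concave, bounded above, and \emph{injective} functions of the allocation, and that the feasible set and achievable utility set are convex and compact, and then cites the transformation theorem of Yaiche et al.~\cite{yaiche2000game} to conclude that \eqref{eq:objmain2} is equivalent to \eqref{eq:nashProbNplayermod} and is a convex problem with a unique solution. You instead prove the equivalence from first principles: constraint \eqref{eq:disagreemt} keeps every surplus strictly positive on the feasible set, the logarithm of the product equals the sum of the logarithms, and a strictly increasing transformation never changes the set of maximizers; concavity then follows by composing $\ln$ (increasing and concave) with the concave surpluses. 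This is exactly the argument inside the cited theorem, so your route buys transparency at essentially no extra cost. One caveat: your closing assertion that the resulting concave program has a \emph{unique} solution does not follow from what you established. Strict concavity of $\ln$ gives uniqueness of the optimal utility vector $(s_1,\dots,s_N)$, but uniqueness of the maximizing allocation $\mathbf{X}$ requires the utilities to be injective in the allocation decision---precisely the hypothesis the paper imports from \cite{yaiche2000game} and the one ingredient your argument never uses. Since the corollary only claims that solving \eqref{eq:objmain2} yields the NBS, and your argmax-equivalence argument establishes that any maximizer of \eqref{eq:objmain2} maximizes \eqref{eq:nashProbNplayermod} and hence is the NBS, this over-claim does not invalidate the proof; but to assert uniqueness of $\mathbf{X}$ you would need to invoke injectivity explicitly.
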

\begin{proof}
	ESP utilities are concave and  bounded above. Furthermore, the feasible set along with the set of achievable utilities ($\mathcal{S}$) is convex and compact (due to the nature of  utilities and constraints). 
	Utilities of ESPs are also injective functions of the allocation decision. Hence \eqref{eq:objmain2} is equivalent to \eqref{eq:nashProbNplayermod} \cite{yaiche2000game}. Since   the logarithm of any positive real number is concave\cite{boyd2004convex},  \eqref{eq:objmain2} is a convex optimization problem with a unique solution, which is the NBS. 
\end{proof}
The centralized Algorithm \ref{algo:centralnbs} provides the allocation decision using NBS. 

\begin{algorithm}
	\begin{algorithmic}[]
		\State \textbf{Input}:  $C, R$, and vector of utility functions of all ESPs $\boldsymbol{u}$ 
		\State \textbf{Output}: The optimal resource allocation $\mathbf{X}$ and payoffs of all ESPs
		\State \textbf{Step $1$:}
		\For{\texttt{$n \in \mathcal{N}$}}
		\State $ {d_n^0} \leftarrow$ \texttt{Objective function at optimal point in Equation \eqref{eq:opt_single}}
		\EndFor
		\State \textbf{Step $2$:} $\mathbf{X} \leftarrow$ \texttt{Solution of the optimization problem in Equation \eqref{eq:objmain2}} 
	\end{algorithmic}
	\caption{Centralized Algorithm for NBS}
	\label{algo:centralnbs}
\end{algorithm}

	\begin{figure*}[!t]
	\normalsize
	\begin{align}
	\label{eq:lagrangian}
	\mathcal{L}(\mathbf{X}, \boldsymbol{\alpha}, \boldsymbol{\beta, \zeta,\gamma,\pi})&= 
	\sum_{n=1}^{N}\ln\bigg(\sum_{\substack{j \in \mathcal{M}_n,\\k \in \mathcal{K}}}\Big(u^j(\sum_{m \in \overline{\mathcal{N}}}{x}_{m,k}^j+{x}_{n,k}^j)- u^j(\sum_{m \in \overline{\mathcal{N}}}{x}_{m,k}^j) \Big)+\Big(\sum_{\substack{l \in \overline{\mathcal{M}},\\k \in \mathcal{K}}}\big( u^l(\sum_{m \in \overline{\mathcal{N}}}{x}_{m,k}^l+{x}_{n,k}^l)-\nonumber\\
	&\hspace*{-1.3in}\quad u^l(\sum_{m \in \overline{\mathcal{N}}}{x}_{m,k}^l)- D^l({x}_{n,k}^l)\big)\Big)-d^0_n\bigg) +   \quad \sum_{\substack{n \in \mathcal{N},\\j \in \mathcal{M},\\k \in \mathcal{K}}}\gamma_{n,k}^{j}x_{n,k}^{j} +\sum_{\substack{n \in \mathcal{N},\\k \in \mathcal{K}}}\alpha_{n,k} (C_{n,k}-\sum_{j \in \mathcal{M}}x_{n,k}^{j}) +\sum_{\substack{j \in \mathcal{M},\\k \in \mathcal{K}}}\beta_{k}^j (r_{k}^{j}-\sum_{n \in \mathcal{N}}x_{n,k}^{j}) + \sum_{n \in \mathcal{N}} \zeta_{n}  \nonumber \\
	& \hspace*{-1.3in}\quad \bigg(\sum_{\substack{j \in \mathcal{M}_n,\\k \in \mathcal{K}}}\Big(u_n^j(\sum_{m \in \overline{\mathcal{N}}}{x}_{m,k}^j+{x}_{n,k}^j)- u^j(\sum_{m \in \overline{\mathcal{N}}}{x}_{m,k}^j) \Big)+ \quad \Big(\sum_{\substack{l \in \overline{\mathcal{M}},\\k \in \mathcal{K}}}\big( u^l(\sum_{m \in \overline{\mathcal{N}}}{x}_{m,k}^l+{x}_{n,k}^l)-u^l(\sum_{m \in \overline{\mathcal{N}}}{x}_{m,k}^l)-\nonumber\\
	&\hspace*{-1.3in}\quad  D^l({x}_{n,k}^l)\big)\Big)-d_n^0
	\bigg)+\sum_{\substack{n\in \mathcal{{N}},\\l\in \overline{\mathcal{M}},\\k\in\mathcal{K}}}\pi_{n,k}^l \Big(u^l(\sum_{m \in \overline{\mathcal{N}}}{x}_{m,k}^l+{x}_{n,k}^l)-
	u^l(\sum_{m \in \overline{\mathcal{N}}}{x}_{m,k}^l)-  D^l({x}_{n,k}^l)\Big).
	\end{align}
	\hrulefill
\end{figure*}
\section{Distributed algorithm for NBS}\label{sec:dist}
While resource allocation and sharing  using a central algorithm is feasible, it is desirable to develop low overhead distributed algorithms. 
To obtain a distributed algorithm,  we rely on \emph{Duality Theory} and use the \emph{Lagrangian} function   in \eqref{eq:lagrangian}. 
The dual optimization problem is given by 
\begin{align}\label{eq:dualopt}
\min_{\boldsymbol{\alpha},\boldsymbol{\beta},\boldsymbol{\zeta,\gamma,\pi}} D(\boldsymbol{\alpha},\boldsymbol{\beta},\boldsymbol{\zeta,\gamma,\pi})&= \mathcal{L}\big(\mathbf{X}^{*}(\boldsymbol{\alpha,\beta},\boldsymbol{\zeta,\gamma,\pi}),\nonumber \\& \boldsymbol{\alpha,\beta,\zeta,\gamma,\pi} \big), \nonumber \\
\; s.t. \; \boldsymbol{\alpha},\boldsymbol{\beta},\boldsymbol{\zeta,\gamma,\pi}\geq 0.
\end{align}
\begin{align}\label{eq:supLag}
\mathbf{X}^{*}(\boldsymbol{\alpha,\beta},\boldsymbol{\zeta,\gamma,\pi})= \arg \max_{\mathbf{X}} \mathcal{L}(\mathbf{X},\boldsymbol{\alpha,\beta},\boldsymbol{\zeta,\gamma,\pi} ).
\end{align}
The dual problem can be solved iteratively using gradient descent (since the dual problem is a minimization problem) as below:
\begin{align}\label{eq:iterativeDual}
\alpha_{n,k}[t+1] &=  \alpha_{n,k}[t]- \phi_{n,k} \frac{\partial \mathcal{L}(\mathbf{X}, \boldsymbol{\alpha}, \boldsymbol{\beta},\boldsymbol{\zeta,\gamma,\pi}) }{\partial  \alpha_{n,k}}, \nonumber \displaybreak[0] \\
\beta_{k}^j[t+1] &=  \beta_{k}^j[t]- \eta_{k}^j \frac{\partial \mathcal{L}(\mathbf{X}, \boldsymbol{\alpha}, \boldsymbol{\beta},\boldsymbol{\zeta,\gamma,\pi}) }{\partial  \beta_{k}^j}, \nonumber \\
\zeta_{n}[t+1] &=  \zeta_{n}[t]- \omega_{n} \frac{\partial \mathcal{L}(\mathbf{X}, \boldsymbol{\alpha}, \boldsymbol{\beta},\boldsymbol{\zeta,\gamma,\pi}) }{\partial  \zeta_{n}}, \nonumber \displaybreak[0]\\
\gamma_{n,k}^{j}[t+1] &=  \gamma_{n,k}^{j}[t]- \theta_{n,k}^{j} \frac{\partial \mathcal{L}(\mathbf{X}, \boldsymbol{\alpha}, \boldsymbol{\beta},\boldsymbol{\zeta,\gamma,\pi}) }{\partial  \gamma_{n,k}^{j}},\nonumber \\
\pi_{n,k}^{l}[t+1]&=\pi_{n,k}^{l}[t]-\psi_{n,k}^l \frac{\partial \mathcal{L}(\mathbf{X}, \boldsymbol{\alpha}, \boldsymbol{\beta},\boldsymbol{\zeta,\gamma,\pi}) }{\partial  \pi_{n,k}^{l}},
\end{align}
where $\phi_{n,k}$, $\eta_{k}^j$,  $\omega_{n}$, $\theta_{n,k}^{j}$ and $\psi_{n,k}^l$ are positive step sizes. Furthermore, the gradients in \eqref{eq:iterativeDual} are given below.
\begin{align}\label{eq:gradforLagrangieMultipliers}
\frac{\partial \mathcal{L}(\mathbf{X}, \boldsymbol{\alpha}, \boldsymbol{\beta}, \boldsymbol{\zeta,\gamma,\pi}) }{\partial  \alpha_{n,k}}&= (C_{n,k}-\sum_{j \in \mathcal{M}}x_{n,k}^{j}),\nonumber \\
\frac{\partial \mathcal{L}(\mathbf{X}, \boldsymbol{\alpha}, \boldsymbol{\beta}, \boldsymbol{\zeta,\gamma,\pi}) }{\partial  \beta_{k}^j} &= (r_{k}^{j}-\sum_{n \in \mathcal{N}}x_{n,k}^{j}),\nonumber \\
\frac{\partial \mathcal{L}(\mathbf{X}, \boldsymbol{\alpha}, \boldsymbol{\beta},\boldsymbol{\zeta,\gamma,\pi}) }{\partial  \zeta_{n}} &=\bigg(\sum_{\substack{j \in \mathcal{M}_n,\\k \in \mathcal{K}}}\Big(u^j(\sum_{m \in \overline{\mathcal{N}}}{x}_{m,k}^j+{x}_{n,k}^j)-  \nonumber \\
&\hspace*{-1.2in}u^j(\sum_{m \in \overline{\mathcal{N}}}{x}_{m,k}^j) \Big)+\Big(\sum_{\substack{l \in \overline{\mathcal{M}},\\k \in \mathcal{K}}}\big( u^l(\sum_{m \in \overline{\mathcal{N}}}{x}_{m,k}^l+{x}_{n,k}^l)-\nonumber\\
&\hspace*{-1.2in}u^l(\sum_{m \in \overline{\mathcal{N}}}{x}_{m,k}^l)- D^l({x}_{n,k}^l)\big)\Big)-d_n^0
\bigg)
, \nonumber \\
\frac{\partial \mathcal{L}(\mathbf{X}, \boldsymbol{\alpha}, \boldsymbol{\beta},\boldsymbol{\zeta,\gamma,\pi}) }{\partial  \gamma_{n,k}^{j}} &= x_{n,k}^{j},\nonumber\\
\frac{\partial \mathcal{L}(\mathbf{X}, \boldsymbol{\alpha}, \boldsymbol{\beta},\boldsymbol{\zeta,\gamma,\pi}) }{\partial  \pi_{n,k}^{l}} &=\Big(u^l(\sum_{m \in \overline{\mathcal{N}}}{x}_{m,k}^l+{x}_{n,k}^l)-\nonumber \\
&\hspace*{-1.2in}
u^l(\sum_{m \in \overline{\mathcal{N}}}{x}_{m,k}^l)-  D^l({x}_{n,k}^l)\Big).
\end{align}
Unless certain  conditions such as Slater's constraint qualification  are satisfied, \emph{strong duality}\footnote{Strong duality implies that there is no duality gap between the primal and dual problem.} is not guaranteed to hold for our primal and dual problems \cite{boyd2004convex}. However,  we rely on the following theorems to show that strong duality holds.
\begin{theorem}[Sufficient Condition\cite{tychogiorgos2013non}]\label{thm:georgesufficient}
	If the price based function $\mathbf{X}^*(\boldsymbol{\alpha,\beta},\boldsymbol{\zeta,\gamma,\pi})$ is continuous at one or more of the optimal Lagrange multipliers,  the iterative algorithm consisting of  \eqref{eq:supLag} and \eqref{eq:iterativeDual} will converge to the global optimal solution. 
\end{theorem}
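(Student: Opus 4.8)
The plan is to exploit two structural facts about the Lagrangian dual that hold regardless of whether the primal in \eqref{eq:objmain2} is convex: the dual function $D(\boldsymbol{\alpha},\boldsymbol{\beta},\boldsymbol{\zeta},\boldsymbol{\gamma},\boldsymbol{\pi})$ in \eqref{eq:dualopt} is convex in the multipliers (being a pointwise supremum of functions that are affine in the multipliers), and weak duality gives $D(\cdot)\geq p^*$ everywhere, where $p^*$ denotes the primal optimum of \eqref{eq:objmain2}. Convexity of $D$ guarantees that the projected gradient-descent recursion \eqref{eq:iterativeDual} converges to a minimizer $\boldsymbol{\lambda}^*=(\boldsymbol{\alpha}^*,\boldsymbol{\beta}^*,\boldsymbol{\zeta}^*,\boldsymbol{\gamma}^*,\boldsymbol{\pi}^*)$ of the dual over the nonnegative orthant for a suitable (e.g. diminishing) choice of the step sizes. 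The argument then reduces to showing that the price-based primal point $\mathbf{X}^*(\boldsymbol{\lambda}^*)$ from \eqref{eq:supLag} is primal-feasible and attains $p^*$, i.e. that the duality gap vanishes.

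First I would identify the gradient of $D$. Since $\mathbf{X}^*(\boldsymbol{\lambda})$ maximizes the Lagrangian \eqref{eq:lagrangian}, an envelope (Danskin-type) argument shows that the partial derivatives of $D$ are exactly the constraint-slack expressions already recorded in \eqref{eq:gradforLagrangieMultipliers}, evaluated at $\mathbf{X}^*(\boldsymbol{\lambda})$ — provided the maximizer is single-valued and varies continuously. Next I would write the optimality conditions for the constrained dual minimization $\min_{\boldsymbol{\lambda}\geq 0}D(\boldsymbol{\lambda})$ at $\boldsymbol{\lambda}^*$: every strictly positive multiplier forces the corresponding component of \eqref{eq:gradforLagrangieMultipliers} to vanish, while every multiplier sitting at zero forces that component to carry the sign that keeps the associated primal constraint satisfied. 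Read at $\mathbf{X}^*(\boldsymbol{\lambda}^*)$, these are precisely primal feasibility together with complementary slackness; stationarity holds by the definition of $\mathbf{X}^*(\boldsymbol{\lambda}^*)$ as the Lagrangian maximizer, and dual feasibility is $\boldsymbol{\lambda}^*\geq 0$. Hence $\mathbf{X}^*(\boldsymbol{\lambda}^*)$ satisfies the full KKT system. To close the gap I would evaluate $D(\boldsymbol{\lambda}^*)=\mathcal{L}(\mathbf{X}^*(\boldsymbol{\lambda}^*),\boldsymbol{\lambda}^*)$: complementary slackness collapses this to the primal objective value, feasibility bounds that value above by $p^*$, and weak duality $D(\boldsymbol{\lambda}^*)\geq p^*$ forces equality, so $\mathbf{X}^*(\boldsymbol{\lambda}^*)$ is a global optimum to which the iterates $\mathbf{X}^*(\boldsymbol{\lambda}[t])$ converge.

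The hard part is exactly where the continuity hypothesis enters, and it is the only place the non-convexity could bite. In a non-convex problem the maximizer $\mathbf{X}^*(\boldsymbol{\lambda})$ may be set-valued and may jump as $\boldsymbol{\lambda}$ crosses $\boldsymbol{\lambda}^*$; then the subdifferential of $D$ at $\boldsymbol{\lambda}^*$ is the convex hull of the slack vectors over all competing maximizers, and the dual stationarity condition $0\in\partial D(\boldsymbol{\lambda}^*)$ can be satisfied by a convex combination of slacks without any single maximizer being feasible. This is precisely the classical duality-gap and oscillation phenomenon that Slater's condition would otherwise preclude. Assuming $\mathbf{X}^*(\boldsymbol{\lambda})$ is continuous at $\boldsymbol{\lambda}^*$ collapses this subdifferential to the single vector of slacks at $\mathbf{X}^*(\boldsymbol{\lambda}^*)$, which legitimizes the envelope formula and the KKT reading above and prevents the primal iterates from oscillating. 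I would therefore concentrate the effort on making the envelope and subdifferential step rigorous under the stated continuity assumption, invoking the non-convex duality framework of \cite{tychogiorgos2013non} for the underlying machinery; the remaining feasibility, complementary-slackness, and weak-duality steps are then routine.
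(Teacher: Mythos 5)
Your proposal cannot be checked against a proof in the paper, because the paper does not prove this statement: Theorem~\ref{thm:georgesufficient} is imported, with attribution, from \cite{tychogiorgos2013non} and used as a black-box ingredient in the proof of Theorem~\ref{thm:duality}. What you have written is in effect a reconstruction of the cited reference's argument, and it is essentially the right one. The chain you describe --- convexity of the dual in \eqref{eq:dualopt} as a pointwise supremum of affine functions of the multipliers; the Danskin/envelope identification of the dual (sub)gradient with the constraint slacks in \eqref{eq:gradforLagrangieMultipliers}; the first-order conditions of $\min_{\boldsymbol{\lambda}\geq 0}D(\boldsymbol{\lambda})$ read at $\mathbf{X}^*(\boldsymbol{\lambda}^*)$, which yield primal feasibility and complementary slackness; and weak duality forcing $D(\boldsymbol{\lambda}^*)$ down onto the primal optimum --- is precisely how the continuity hypothesis is exploited in the non-convex duality framework of \cite{tychogiorgos2013non}. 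Your diagnosis of where continuity enters is also the correct one: it collapses $\partial D(\boldsymbol{\lambda}^*)$ to the singleton slack vector of a single maximizer, ruling out the non-convex pathology in which $0\in\partial D(\boldsymbol{\lambda}^*)$ holds only as a convex combination of slacks of distinct maximizers, none of which is primal feasible.

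Two technical caveats would need attention to make the sketch rigorous. First, the recursion \eqref{eq:iterativeDual} as stated uses fixed positive step sizes $\phi_{n,k},\eta_{k}^j,\omega_{n},\theta_{n,k}^{j},\psi_{n,k}^l$, whereas the (sub)gradient-descent convergence you invoke requires diminishing or sufficiently small steps and bounded subgradients; you flag this, but it must be made a hypothesis or handled explicitly. Second, your envelope step presupposes that the maximizer in \eqref{eq:supLag} is attained and single-valued near $\boldsymbol{\lambda}^*$, not merely at it; in the paper this is supplied downstream in Theorem~\ref{thm:duality} by restricting to concave injective utilities whose first derivative admits an inverse, which makes $\mathbf{X}^*(\cdot)$ single-valued and explicitly computable via \eqref{eq:gradLagObj22} and \eqref{eq:gradLagObjcase22}. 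Neither caveat changes the structure of your argument.
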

\begin{theorem}[Necessary Condition\cite{tychogiorgos2013non}]\label{thm:georgenecessary}
	The condition in Theorem \ref{thm:georgesufficient} is also necessary if at least one of the constraints in \eqref{eq:objmain2} is active (binding) at the optimal solution. 
\end{theorem}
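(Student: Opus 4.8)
The plan is to establish necessity by contraposition. I would assume that at least one constraint of \eqref{eq:objmain2} is active at the optimum and that the price-based function $\mathbf{X}^*(\boldsymbol{\alpha,\beta},\boldsymbol{\zeta,\gamma,\pi})$ is \emph{discontinuous} at the optimal Lagrange multipliers, and then show that the primal--dual iteration \eqref{eq:supLag}--\eqref{eq:iterativeDual} cannot recover the global optimum, i.e. a nonzero duality gap persists. This contrapositive is exactly the assertion that, when a constraint binds, the continuity hypothesis of Theorem~\ref{thm:georgesufficient} is necessary for convergence to the global optimum.

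First I would record the structural fact that the dual objective $D(\boldsymbol{\alpha,\beta},\boldsymbol{\zeta,\gamma,\pi})$ in \eqref{eq:dualopt} is convex in the multipliers \emph{irrespective} of any concavity in $\mathbf{X}$, since it is a pointwise supremum over $\mathbf{X}$ of functions that are affine in the multipliers (see \eqref{eq:lagrangian}). By a Danskin/envelope argument, wherever the Lagrangian maximizer $\mathbf{X}^*(\cdot)$ is single-valued and continuous, $D$ is differentiable and its partial derivatives are precisely the constraint residuals listed in \eqref{eq:gradforLagrangieMultipliers}; where $\mathbf{X}^*(\cdot)$ is discontinuous, the maximizer switches between distinct optimizers, $D$ develops a kink, and the one-sided directional derivatives along the offending multiplier disagree, so that the subdifferential $\partial D$ at that point is a nondegenerate interval rather than a singleton.

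Next I would connect this kink to the KKT system at the optimum. At the dual minimizer the stationarity condition is $\mathbf{0}\in\partial D$, but the \emph{particular} subgradient realized by the algorithm's primal recovery is the constraint residual evaluated at the chosen $\mathbf{X}^*$, and at a kink this realized residual need not be zero even though $\mathbf{0}$ lies in the subdifferential. Here the active-constraint hypothesis does the essential work: for a binding constraint the corresponding multiplier is (strictly) positive at the optimum, so feasibility-with-equality together with complementary slackness forces the residual of that constraint, evaluated at the recovered point, to vanish exactly. The jump in $\mathbf{X}^*(\cdot)$ makes this residual straddle zero, its two one-sided limits sitting strictly on either side; the realized value $\mathbf{X}^*(\boldsymbol{\alpha,\beta},\boldsymbol{\zeta,\gamma,\pi})$ equals one of these limits and is therefore nonzero. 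Hence the recovered point violates the binding constraint (or complementary slackness), the KKT conditions fail, and the iterate is separated from the true primal optimum by a positive gap. Taking the contrapositive gives necessity.

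I expect the main obstacle to be ruling out the \emph{benign} discontinuity, in which $\mathbf{X}^*(\cdot)$ jumps yet the active constraint's residual happens to pass through zero on the side the algorithm lands. I would dispose of this by arguing that a maximizer jump leaving every binding-constraint residual continuous would contradict the kink being located at the dual minimum: the discontinuity could then be smoothed without moving the minimizer, so it could not be the mechanism forcing $\mathbf{0}\in\partial D$. Thus the discontinuity must be \emph{visible} through a binding constraint whose multiplier is strictly positive, and the active-constraint assumption in the theorem is precisely what guarantees such a visible constraint exists, closing the argument.
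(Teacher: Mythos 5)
The first thing to note is that the paper does not prove this statement at all: Theorem~\ref{thm:georgenecessary} is imported verbatim from \cite{tychogiorgos2013non} (hence the citation in its header) and is used as a black box, together with Theorem~\ref{thm:georgesufficient} and Lemma~\ref{lem:constraintsactive}, inside the proof of Theorem~\ref{thm:duality}. So your attempt is not competing with a proof in the paper; it has to stand on its own as a reconstruction of the cited result. Your overall architecture is the right family of ideas and is faithful in spirit to that reference: contraposition; convexity of the dual $D$ as a pointwise supremum of functions affine in the multipliers; a Danskin-type identification of $\partial D$ with the convex hull of constraint residuals over Lagrangian maximizers; and the observation that at a kink the residual realized by the algorithm's primal recovery need not vanish.

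There are, however, genuine gaps. First, your assertion that a binding constraint has a \emph{strictly} positive multiplier is strict complementarity, which does not follow from the KKT conditions; the paper itself needs a dedicated lemma (Lemma~\ref{lem:constraintsactive}, resting on a full-utilization assumption) to claim $\boldsymbol{\alpha}^{*}>0$ for the capacity constraints, so necessity cannot simply assert positivity for whichever constraint happens to bind. Second, and this is the crux: from $\mathbf{0}\in\partial D$ at the dual optimum you conclude that the two one-sided limits of the residual sit \emph{strictly} on either side of zero. But $\mathbf{0}$ may be an \emph{endpoint} of the subdifferential interval: one of the Lagrangian maximizers at the optimal multipliers can be the primal optimum itself, feasible and with the binding constraint tight. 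In that case the price function $\mathbf{X}^{*}(\boldsymbol{\alpha},\boldsymbol{\beta},\boldsymbol{\zeta},\boldsymbol{\gamma},\boldsymbol{\pi})$ is still discontinuous (multi-valued) there, yet the recovery can land exactly on the optimum --- this is the ``benign'' case you acknowledge, and it is not a corner case but precisely what makes necessity delicate. Your device for excluding it, that ``the discontinuity could then be smoothed without moving the minimizer,'' is not a legitimate step: $D$ is determined by the problem data and cannot be modified. Finally, even granting the strict straddle, you have only shown that the one-shot recovery at the limiting multiplier is suboptimal; the theorem asserts failure of the \emph{iterative} scheme \eqref{eq:supLag}--\eqref{eq:iterativeDual}, so you still owe an argument about the dynamics --- e.g., that the dual iterates cannot converge one-sidedly to the optimal multiplier with the primal iterates converging to the optimal point along the way --- before non-optimality of the recovery at the limit implies non-convergence of the algorithm.
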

\begin{lemma}\label{lem:constraintsactive}
	At the optimal solution, the optimal Lagrange multiplier vector corresponding to the capacity constraint is non-zero, i.e., $\boldsymbol{\alpha_k^{*}}>0$. 
\end{lemma}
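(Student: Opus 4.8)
The plan is to read the claim off the Karush--Kuhn--Tucker (KKT) conditions of the convex program \eqref{eq:objmain2}, using the standing assumptions of Section~\ref{sec:assumption} that utilities are strictly increasing and that every resource is fully utilized at the optimum. First I would record the stationarity conditions obtained by differentiating the Lagrangian \eqref{eq:lagrangian}, together with dual feasibility $\alpha_{n,k}^{*},\beta_{k}^{j*},\zeta_{n}^{*},\gamma_{n,k}^{j*},\pi_{n,k}^{l*}\ge 0$ and complementary slackness. Writing $\Lambda_{n}:=U_{n}-d_{n}^{0}$ for the bracketed utility term of ESP $n$, the strict disagreement constraint \eqref{eq:disagreemt} guarantees $\Lambda_{n}>0$, so the logarithm in \eqref{eq:objmain2} is differentiable at the optimum and the factor $\tfrac{1}{\Lambda_{n}}$ appearing in every gradient is finite and strictly positive.

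Next I would argue by contradiction, assuming the whole vector vanishes, $\boldsymbol{\alpha_k^{*}}=0$, i.e. $\alpha_{n,k}^{*}=0$ at every ESP $n$ for the fixed resource $k$. By the scarcity assumption (aggregate demand for resource $k$ exceeds total capacity), at least one application $j_{0}$ is not fully served, $\sum_{m\in\mathcal{N}}x_{m,k}^{*j_{0}}<r_{k}^{j_{0}}$, so complementary slackness on the demand constraint \eqref{eq:obj2} forces $\beta_{k}^{j_{0}*}=0$. Letting $n_{0}$ denote the native ESP of $j_{0}$, the stationarity condition $\partial\mathcal{L}/\partial x_{n_{0},k}^{j_{0}}=0$ reads
\begin{equation}
u^{'j_{0}}\!\big(x_{n_{0},k}^{*j_{0}}\big)\Big(\tfrac{1}{\Lambda_{n_{0}}}+\zeta_{n_{0}}^{*}\Big)=\alpha_{n_{0},k}^{*}+\beta_{k}^{j_{0}*}-\gamma_{n_{0},k}^{j_{0}*}.
\end{equation}
Substituting $\alpha_{n_{0},k}^{*}=0$ and $\beta_{k}^{j_{0}*}=0$ leaves $u^{'j_{0}}(\tfrac{1}{\Lambda_{n_{0}}}+\zeta_{n_{0}}^{*})=-\gamma_{n_{0},k}^{j_{0}*}\le 0$, whereas the left-hand side is strictly positive because $u^{'j_{0}}>0$, $\Lambda_{n_{0}}>0$, and $\zeta_{n_{0}}^{*}\ge 0$. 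This contradiction shows $\boldsymbol{\alpha_k^{*}}\neq 0$; combined with dual feasibility $\boldsymbol{\alpha_k^{*}}\ge 0$, it yields $\boldsymbol{\alpha_k^{*}}>0$, and in particular the capacity constraint \eqref{eq:obj1} is active with a nonzero multiplier, which is exactly what is needed to invoke the necessary condition of Theorem~\ref{thm:georgenecessary}.

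The main obstacle I anticipate is the treatment of non-native allocations and, more generally, upgrading ``the vector is nonzero'' to ``every component is strictly positive.'' For a non-native application the relevant quantity is $f^{l}=u^{l}-D^{l}$, whose derivative need not be sign-definite, and the extra multiplier $\pi_{n,k}^{l}$ from \eqref{eq:obj4} enters the stationarity relation; the clean sign argument above therefore applies directly only at the native ESP of an unsatisfied application. I would handle this by noting that $\pi_{n,k}^{l*}\ge 0$ keeps the coefficient $\tfrac{1}{\Lambda_{n}}+\zeta_{n}^{*}+\pi_{n,k}^{l*}$ positive and by exploiting the minimum-acceptable-allocation design of the utility (Section~\ref{sec:assumption}), which keeps $u^{l}$ increasing on the operating range; the careful bookkeeping that links ``zero capacity price $\Rightarrow$ full satisfaction'' across ESPs and thereby contradicts scarcity is where the genuine work lies.
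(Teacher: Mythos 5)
Your core contradiction step is sound as far as it goes, but it does not prove the lemma as stated and as later used. Assuming $\alpha^{*}_{n,k}=0$ for all $n$ and locating an under-served application $j_{0}$ with native ESP $n_{0}$, your stationarity/sign argument yields $\alpha^{*}_{n_{0},k}>0$, i.e.\ the vector $\boldsymbol{\alpha_k^{*}}$ is not identically zero, with positivity established only at the native ESP of an unsatisfied application. The lemma, however, asserts componentwise positivity ($\alpha^{*}_{n,k}>0$ for \emph{every} $n$), and that is the form invoked in the proof of Theorem~\ref{thm:duality}, where positivity of the numerator $(\alpha^{*}_{n,k}+\beta^{*j}_{k}-\gamma^{*j}_{n,k})$ is needed for an arbitrary ESP $n$, not just for one distinguished $n_{0}$. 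Your closing paragraph concedes exactly this point and sketches, but does not carry out, the required cross-ESP bookkeeping, so the proposal is incomplete. Note also that stationarity alone cannot close the gap: for an ESP $n$ whose served applications are all fully satisfied, the relation $u'(\cdot)\,(1+\Delta_{n}\zeta_{n})=\Delta_{n}(\alpha_{n,k}+\beta^{j}_{k})$ is perfectly consistent with $\alpha^{*}_{n,k}=0$ and $\beta^{*j}_{k}>0$, so some hypothesis beyond the KKT conditions is unavoidable.

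The paper takes a different and much shorter route that supplies exactly that hypothesis: by the standing assumption of Section~\ref{sec:assumption}, \emph{all} resources are fully utilized at the optimum, so every capacity constraint $\sum_{j}x^{*j}_{n,k}=C_{n,k}$ is active, and the paper then concludes $\alpha^{*}_{n,k}>0$ from complementary slackness for every $n$ and $k$ simultaneously (strictly speaking this is an appeal to strict complementarity, since an active constraint need not carry a positive multiplier, but it is the assumption-backed step the paper relies on). So where the paper assumes full utilization and obtains positivity everywhere in one line, you assume scarcity and prove positivity only where your sign argument applies. To match the lemma you would either have to import the paper's full-utilization/strict-complementarity assumption for every $(n,k)$ pair, or complete the deferred argument linking a zero capacity price at one ESP to a contradiction with scarcity; the counterexample pattern above shows the latter cannot be extracted from stationarity and complementary slackness alone.
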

\begin{proof}
	At the optimal point, all  resources are fully utilized,  i.e., capacity constraints are active. From \emph{complementary slackness}\cite{boyd2004convex}, we know that 
	\begin{align}\label{eq:compslackness}
	\alpha^*_{n,k}\bigg(\sum_{j\in \mathcal{M}} x_{n,k}^{*j}- C_{n,k}\bigg)=0, \forall n \in \mathcal{N}. 
	\end{align}
	Since $\sum_{j\in \mathcal{M}} x_{n,k}^{*j}= C_{n,k}$, i.e., the constraint is active, which implies that $\boldsymbol{\alpha_k^{*}} >0$.
\end{proof}
\begin{theorem}\label{thm:duality}
For any concave injective utility function for which the inverse of the first derivative exists, the iterative distributed algorithm consisting of  \eqref{eq:supLag} and \eqref{eq:iterativeDual}  converges to the global optimal solution.
\end{theorem}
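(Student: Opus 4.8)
The plan is to reduce the convergence claim entirely to the sufficient condition of Theorem~\ref{thm:georgesufficient}, so that the only substantive work is verifying that the price-based function $\mathbf{X}^*(\boldsymbol{\alpha,\beta},\boldsymbol{\zeta,\gamma,\pi})$ defined in \eqref{eq:supLag} is continuous at the optimal Lagrange multipliers. Recall from the Corollary that \eqref{eq:objmain2} is a convex program with a unique optimizer, so the inner maximization in \eqref{eq:supLag} is well posed; the task is to show that its maximizer varies continuously with the dual variables.

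First I would derive a closed form for the maximizer from the stationarity conditions of the Lagrangian \eqref{eq:lagrangian}. Setting $\partial \mathcal{L}/\partial x_{n,k}^j \seq 0$ for a native application $j \in \mathcal{M}_n$ yields $u^{'j}(x_{n,k}^j)\,(1+\Lambda_{n,k}^j \zeta_n) = \Lambda_{n,k}^j(\alpha_{n,k}+\beta_k^j-\gamma_{n,k}^j)$, and analogously for a non-native application $l$ one obtains the same relation with the net-utility function $f^l_n(\cdot) \seq u^l(\cdot) - D^l(\cdot)$ in place of $u^j$. Here $\Lambda_{n,k}^j$ is the bracketed utility surplus appearing inside the logarithm, net of $d_n^0$. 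Solving each equation expresses the optimal allocation as the inverse of the first derivative evaluated at a fixed rational expression in the multipliers, i.e. $x_{n,k}^{*j} \seq u^{'-j}(\cdot)$ and $x_{n,k}^{*l} \seq f_n^{'-l}(\cdot)$.

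Continuity then follows from three facts, and this is the step I expect to carry the weight of the argument. Because each utility is concave and injective, its first derivative is continuous and strictly monotone, so by hypothesis the inverse $u^{'-j}$ exists and is itself continuous; the convexity of $D^l$ keeps $f^l_n$ concave, so the same holds for $f_n^{'-l}$. The argument fed into these inverse maps is a rational function of $(\boldsymbol{\alpha,\beta},\boldsymbol{\zeta,\gamma})$ whose denominator $1+\Lambda_{n,k}^j\zeta_n$ is bounded below by $1$ on the dual-feasible region $\boldsymbol{\zeta}\geq 0$, since the disagreement constraint \eqref{eq:disagreemt} forces the surplus $\Lambda_{n,k}^j>0$ (otherwise the logarithm in \eqref{eq:objmain2} would be undefined). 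Composing a continuous inverse-derivative map with this continuous rational map shows that $\mathbf{X}^*$ is continuous on the dual-feasible set, in particular at the optimal multipliers.

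With continuity established, Theorem~\ref{thm:georgesufficient} immediately yields convergence of the iteration \eqref{eq:supLag}--\eqref{eq:iterativeDual} to the global optimum, which by the Corollary is the NBS. To confirm that this hypothesis is not vacuous I would invoke Lemma~\ref{lem:constraintsactive}: the capacity constraint is active with $\boldsymbol{\alpha_k^{*}}>0$, so at least one constraint of \eqref{eq:objmain2} binds at optimality, and Theorem~\ref{thm:georgenecessary} then renders the continuity condition both necessary and sufficient, closing the gap between the two price-based-continuity theorems. The main obstacle is the continuity verification of $\mathbf{X}^*$; once the explicit inverse-derivative form is in hand, the remaining convexity and duality facts follow directly from the earlier results.
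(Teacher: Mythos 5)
Your overall strategy is the same as the paper's: derive the price-based function from the stationarity conditions of \eqref{eq:lagrangian}, show it is continuous at the optimal Lagrange multipliers, and invoke Theorem~\ref{thm:georgesufficient}. Your denominator argument (the bound $1+\Lambda_{n,k}^j\zeta_n\geq 1$ on the dual-feasible set $\boldsymbol{\zeta}\geq 0$, given $\Lambda_{n,k}^j>0$) is fine, and arguably cleaner than the paper's appeal to complementary slackness ($\zeta_n^*=0$). However, there is a genuine gap in the continuity step. You claim that composing the continuous inverse-derivative map with the continuous rational map in the multipliers gives continuity of $\mathbf{X}^*$, but $u^{'-j}$ is defined and continuous only on the \emph{range} of $u^{'j}$. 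For the utilities this framework is built around, e.g. $u^j=1-e^{-(x-r+\delta)}$, the inverse derivative is a logarithm, defined only for strictly positive arguments. Hence continuity of the composition at the optimal multipliers requires verifying that the full expression $\Delta_{n}(\alpha_{n,k}+\beta_{k}^j-\gamma_{n,k}^{j})/(1+\Delta_{n}\zeta_{n})$ is positive there, i.e.\ you must also prove the \emph{numerator} is positive at the optimum. This is exactly the part of the paper's proof you omit: it uses Lemma~\ref{lem:constraintsactive} to obtain $\alpha_{n,k}^{*}>0$ and a sensitivity-analysis argument to obtain $\alpha_{n,k}^{*}+\beta_{k}^{*j}>\gamma_{n,k}^{*j}$, which together make the numerator strictly positive. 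You do invoke Lemma~\ref{lem:constraintsactive}, but only to argue via Theorem~\ref{thm:georgenecessary} that the continuity condition is also necessary --- a point not needed for the convergence claim --- rather than to close this hole, which is where the lemma actually does its work.

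A secondary inaccuracy: for non-native applications $l\in\mathcal{M}\backslash\mathcal{M}_n$ the stationarity condition is not ``the same relation'' as in the native case. The multiplier $\pi_{n,k}^l$ of constraint \eqref{eq:obj4} enters, and the denominator becomes $1+\Delta_{n}\zeta_{n}+\Delta_{n}\pi_{n,k}^l$ as in \eqref{eq:gradLagObjcase22}. Since $\pi_{n,k}^l\geq 0$ your lower bound of $1$ survives, so the argument extends, but a complete proof must carry this term and then repeat the positivity discussion for $f^{'-l}$ as the paper does.
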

\begin{proof}
	We show that the price function obtained by $\frac{\partial \mathcal{L}(\mathbf{X}, \boldsymbol{\alpha}, \boldsymbol{\beta},\boldsymbol{\zeta,\gamma,\pi})}{\partial x_{n,k}^{j}}=0$ is continuous at one or more of the optimal Lagrange multipliers for \eqref{eq:lagrangian}.  
	
	Let  $u^{'-j}(\boldsymbol{.})$ 
	and $f^{'-l}(\boldsymbol{.})$ 
	represent   
	the inverses of the first derivative of $u^j$ and $\big(u^l(\sum_{m\in\overline{\mathcal{N}}} x_{m,k}^l+x_{n,k}^l)-u^l(\sum_{m\in\overline{\mathcal{N}}} x_{m,k}^l) -D^l(x_{n,k}^l)\big)$, respectively. 
	$\alpha_{n,k}$,  $\beta_{k}^j$, $\zeta_{n}$,  $\gamma_{n,k}^{j}$ and $\pi_{n,k}^l$ are the \emph{Lagrange} multipliers whereas:   
	\begin{align}
	\Delta_{n}&=\sum_{\substack{j \in \mathcal{M}_n,\\k \in \mathcal{K}}}\Big(u^j(\sum_{m \in \overline{\mathcal{N}}}{x}_{m,k}^j+{x}_{n,k}^j)- u^j(\sum_{m \in \overline{\mathcal{N}}}{x}_{m,k}^j) \Big)+ \nonumber \\
	&\Big(\sum_{\substack{l \in \overline{\mathcal{M}},\\k \in \mathcal{K}}}\big( u^l(\sum_{m \in \overline{\mathcal{N}}}{x}_{m,k}^l+{x}_{n,k}^l)-u^l(\sum_{m \in \overline{\mathcal{N}}}{x}_{m,k}^l)- \nonumber\\
	&D^l({x}_{n,k}^l)\big)\Big)-d_n^0
	\end{align}
	We consider two different cases
	
	\subsubsection{	\textbf{$j \in \mathcal{M}_n$:}$\frac{\partial \mathcal{L}}{\partial x_{n,k}^{j}}=0$}
		\begin{align}
		\label{eq:gradLagObj22}
		&u^{'j}(\sum_{m \in \overline{\mathcal{N}}}{x}_{m,k}^j+x_{n,k}^{j})(1+\Delta_{n}^{}\zeta_{n})-\Delta_{n}^{}(\alpha_{n,k}+\nonumber \\
		&\beta_{k}^j-\gamma_{n,k}^{j})=0,\nonumber \\	
		&\implies u^{'j}(\sum_{m \in \overline{\mathcal{N}}}{x}_{m,k}^j+x_{n,k}^{j})=\frac{\Delta_{n}^{}(\alpha_{n,k}+\beta_{k}^j- \gamma_{n,k}^{j})}{1+\Delta_{n}^{}\zeta_{n}}, \nonumber \\
		& x_{n,k}^{j}=u^{'-j}\Bigg(\frac{\Delta_{n}^{}(\alpha_{n,k}+\beta_{k}^j- \gamma_{n,k}^{j})}{1+\Delta_{n}^{}\zeta_{n}}\Bigg)-\sum_{m \in \overline{\mathcal{N}}}{x}_{m,k}^j. 
		\end{align}
		 We prove that \eqref{eq:gradLagObj22} is continuous at the optimal point by showing that the numerator $\Delta_{n}^{}(\alpha_{n,k}+\beta_{k}^j- \gamma_{n,k}^{j})$ and denominator $1+\Delta_{n}^{}\zeta_{n} $ are positive\footnote{Positive numerator and denominator are required if $u^{'-j}$ is $\log$. For other cases, it will suffice to prove that the denominator is non-zero. }. 	At the optimal point, the denominator $(1+\Delta_{n}^{*}\zeta^*_{n})$ is positive as $\Delta_{n}^{*}>0$ (from Section \ref{sec:assumption}) and $\zeta^*_{n}=0$ (from complementary slackness).	Similarly, in the numerator, 
		 $(\alpha^*_{n,k}+\beta_{k}^{*j}-\gamma_{n,k}^{*j})>0$, as  $\alpha_{n,k}^{*}>0$ (from Lemma \ref{lem:constraintsactive}) and  $\alpha_{n,k}^{*}+\beta_{k}^{*j}> \gamma_{n,k}^{*j}$ (from \emph{sensitivity analysis} \cite{boyd2004convex}). Hence, \eqref{eq:gradLagObj22} is continuous at the optimal point.
\\		
		\subsubsection{\textbf{$l \in \{\mathcal{M}\backslash\mathcal{M}_n\}$:} $\frac{\partial \mathcal{L}}{\partial x_{n,k}^{l}}=0$}
		\begin{align}
		\label{eq:gradLagObjcase22}
		&f^{'l}(\sum_{m \in \overline{\mathcal{N}}}{x}_{m,k}^l+x_{n,k}^{l})(1+\Delta_{n}^{}\zeta_{n}+\Delta_{n}^{}\pi_{n,k}^l)-\Delta_{n}^{}(\alpha_{n,k}+\nonumber \\
		&\beta_{k}^l-\gamma_{n,k}^{l})=0,\nonumber \\
		&\implies  f^{'l}(\sum_{m \in \overline{\mathcal{N}}}{x}_{m,k}^l+x_{n,k}^{l})=\frac{\Delta_{n}^{}(\alpha_{n,k}+\beta_{k}^l-\gamma_{n,k}^{l})}{1+\Delta_{n}^{}\zeta_{n}+\Delta_{n}^{}\pi_{n,k}^j}. \nonumber \\
		&  x_{n,k}^{l}=f^{'-l}\Bigg(\frac{\Delta_{n}^{}(\alpha_{n,k}+\beta_{k}^l-\gamma_{n,k}^{l})}{1+\Delta_{n}^{}\zeta_{n}+\Delta_{n}^{}\pi_{n,k}^l}\Bigg)-\sum_{m \in \overline{\mathcal{N}}}{x}_{m,k}^l. \nonumber \\
		\end{align}
	The continuity at optimal point can be established using arguments similar  to that for  $j \in \mathcal{{M}}_n$ case. The proof of zero-duality gap follows from Theorem \ref{thm:georgesufficient}.  Hence, strong duality holds and our proposed distributed algorithm converges to the NBS. 
\end{proof}

For calculating $x_{n,k}^{j}$ at time $t$, we use $\Delta_{n}^{}$ calculated at $t-1$. 
Algorithm \ref{algo:primal-dual-alg} is a distributed algorithm that provides the global optimal solution to \eqref{eq:nashProbNplayermod}.
\paragraph{Protocol for Distributed Execution}
All ESPs first broadcast information regarding resource requests from their native applications. Then any randomly chosen ESP starts allocating resources as specified in Step $2$ of Algorithm \ref{algo:primal-dual-alg} and updates the corresponding Lagrangian multipliers. This ESP then  passes on the information about its allocation  and updated request matrix to the next ESP (that has not yet allocated resources in the current round\footnote{A single round consists of all the ESPs executing Step $2$ of Algorithm \ref{algo:primal-dual-alg} once.}) that repeats the  same procedure until all the ESPs allocate their resources.    This process continues until the first order conditions given in \cite{ktlagrangian} are met. 

\begin{algorithm}
	\begin{algorithmic}[]
		\State \textbf{Input}: $\forall\; \boldsymbol{\alpha_0,\; \beta_0,\zeta_0,\gamma_0,\pi_0}$, $C, R$, vector of utility function of all ESPs $\boldsymbol{u}$  and $\mathbf{X}_0$ 
		\State \textbf{Output}: The optimal resource allocation $\mathbf{X} $ and payoffs of all ESPs
		\State \textbf{Step $0$:}  $t=0$, $\boldsymbol \alpha[t]\leftarrow\boldsymbol \alpha_0$, $\boldsymbol\beta[t]\leftarrow\boldsymbol\beta_0$, $\boldsymbol\zeta[t]\leftarrow\boldsymbol\zeta_0$, $\boldsymbol\gamma[t]\leftarrow\boldsymbol\gamma_0$,$\boldsymbol\pi[t]\leftarrow\boldsymbol\pi_0$$, \mathbf{X}[t]\leftarrow \mathbf{X}_0$
			\State \textbf{Step $1$:}
		\For{\texttt{$n \in \mathcal{N}$}}
		\State $ {d_n^0} \leftarrow$ \texttt{Objective function at optimal point in Equation \eqref{eq:opt_single}}
		\EndFor
		\State \textbf{Step 2: $t\geq 1$} 
		\While{{First order conditions\cite{ktlagrangian}$\neq$true }}
		\State Compute $ x_{n,k}^{j}[t+1]$ for $j\in\mathcal{M},$ $k\in\mathcal{K}$ and $n\in\mathcal{N}$ through \eqref{eq:gradLagObj22} and \eqref{eq:gradLagObjcase22};
		\State Compute $\boldsymbol\alpha[t+1]$, $\boldsymbol\beta[t+1]$, $\boldsymbol\zeta[t+1]$,  $\boldsymbol\gamma[t+1]$ and $\boldsymbol\pi[t+1]$ through~(\ref{eq:iterativeDual}) given $\mathbf{X}[t+1],$ $\boldsymbol\alpha[t]$,  $\boldsymbol \beta[t]$, $\boldsymbol \zeta[t]$,  $\boldsymbol\gamma[t]$ and $\boldsymbol\pi[t]$
		\EndWhile
	\end{algorithmic}
	\caption{Distributed Algorithm for NBS}
	\label{algo:primal-dual-alg}
\end{algorithm}

\section{Simulation Results}\label{sec:exp_results}

We evaluate the performance of the proposed NBS framework for resource sharing and allocation across several  settings  in Table \ref{tab:settings}. 
Each SP has three different  resources, i.e., storage, communication and computation. The model can be extended to include other  resources/parameters. 
We study the proposed framework using both synthetic and real-world data traces \cite{shen2015statistical,kohne2014federatedcloudsim,kohne2016evaluation}. 
For the study with synthetic and real-world data traces,  request matrices and capacity vectors, $\forall n \in \mathcal{N}$ are randomly generated for each setting. 
To show the advantage of resource sharing for both synthetic and   data trace,  we set large resource capacities at certain SPs so that they can improve their utilities by sharing  available resources with other SPs for meeting the demand of  latter SPs. 

Simulations were run in \texttt{Matlab R2019a} on a \texttt{Core-i7} processor with \texttt{16 GB RAM}. To solve the optimization problems in \eqref{eq:opt_single} and \eqref{eq:objmain2}, we use the \texttt{OPTI-toolbox}\cite{currie2012opti}. 
We  evaluate our proposed algorithms from the perspective of  service providers that are interested in maximizing their utilities and  evaluate the impact of our framework on the applications.	
We define  application \emph{request satisfaction} (RS)  as the ratio of  allocated  resources to   requested resources. 
Mathematically, average RS for an SP $n$ in the resource sharing case  is defined as: 
\begin{align}\label{eq:requestsatisfaction}
\centering 
RS_{n}= \frac{\sum_{j \in \mathcal{M}_n}\sum_{k \in \mathcal{K}}\big(\frac{\sum_{m \in \mathcal{{N}}}x_{m,k}^{j}}{r_{n,k}^{j}}\big)}{M_nK} \times 100, 
\end{align} 
The utility  and communication cost functions used in simulation are given in \eqref{eq:utility}.

\begin{align}\label{eq:utility}
\centering
u^j(x_{n,k}^j) & = 1-e^{-(x_{n,k}^j-r_{n,k}^j+\delta)}, \nonumber \\
D^j(x_{n,k}^j) &= \frac{x_{n,k}^j}{w},
\end{align}
Here $\delta$ is set to $1$ and weight $w$ is randomly chosen.

\begin{table}[]
	\centering
	\caption{Simulation network settings for NBS based resource sharing
		framework.}
	\begin{tabular}{|l|l|}
		\hline
		\textbf{Setting}   & \textbf{Parameters} \\ \hline
		\textbf{1} & $N=3, M_n=3,\forall n \in \mathcal{N}, K=3$                    \\ \hline
		\textbf{2} & $N=3, M_n=20,\forall n \in \mathcal{N}, K=3$                    \\ \hline
		\textbf{3 (data traces)} & $N=3, M_n=20,\forall n \in \mathcal{N}, K=3$                    \\ \hline
		\textbf{4} & $N=6, M_n=6,\forall n \in \mathcal{N}, K=3$                    \\ \hline
		\textbf{5} & $N=6, M_n=20,\forall n \in \mathcal{N}, K=3$                    \\ \hline
	\end{tabular}
	\label{tab:settings}
\end{table}

\begin{figure*}
	
	\centering
	\includegraphics[width=1.0\textwidth]{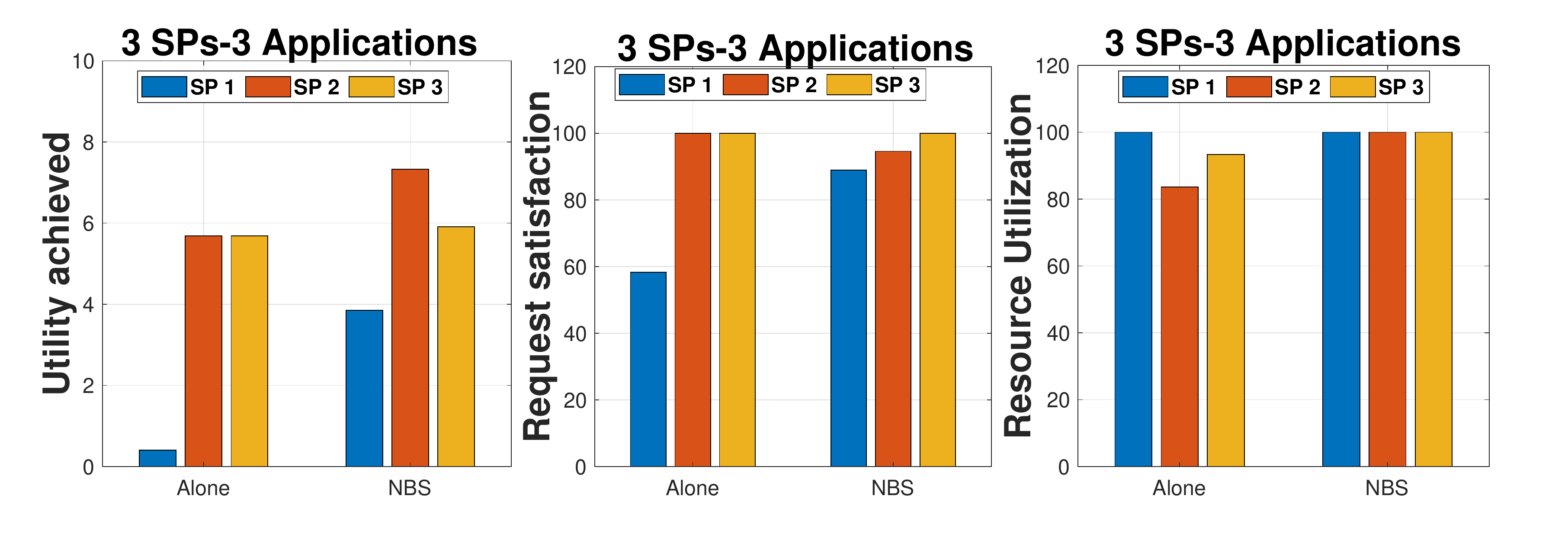}
	\caption{Utility, average request satisfaction and average resource utilization for Setting $1$ when SPs are working alone and using our proposed NBS framework.}
	\protect\label{fig:withNBS1}
\end{figure*}

\begin{figure*}
	\centering
	\includegraphics[width=1.0\textwidth]{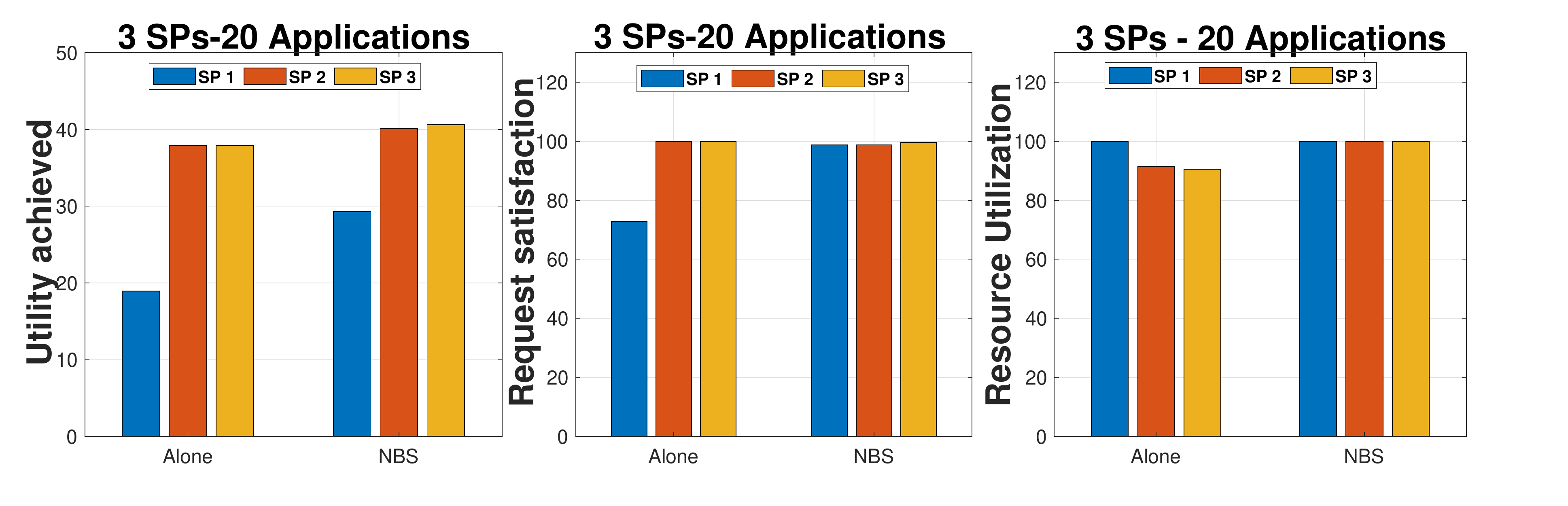}
	\caption{Utility, average request satisfaction and average resource utilization for Setting $2$ when SPs are working alone and using our proposed NBS framework.}
	\protect\label{fig:withNBS2}
\end{figure*}

\subsection{Simulations results for synthetic data}
\begin{figure*}
	\centering
	\includegraphics[width=1.0\textwidth]{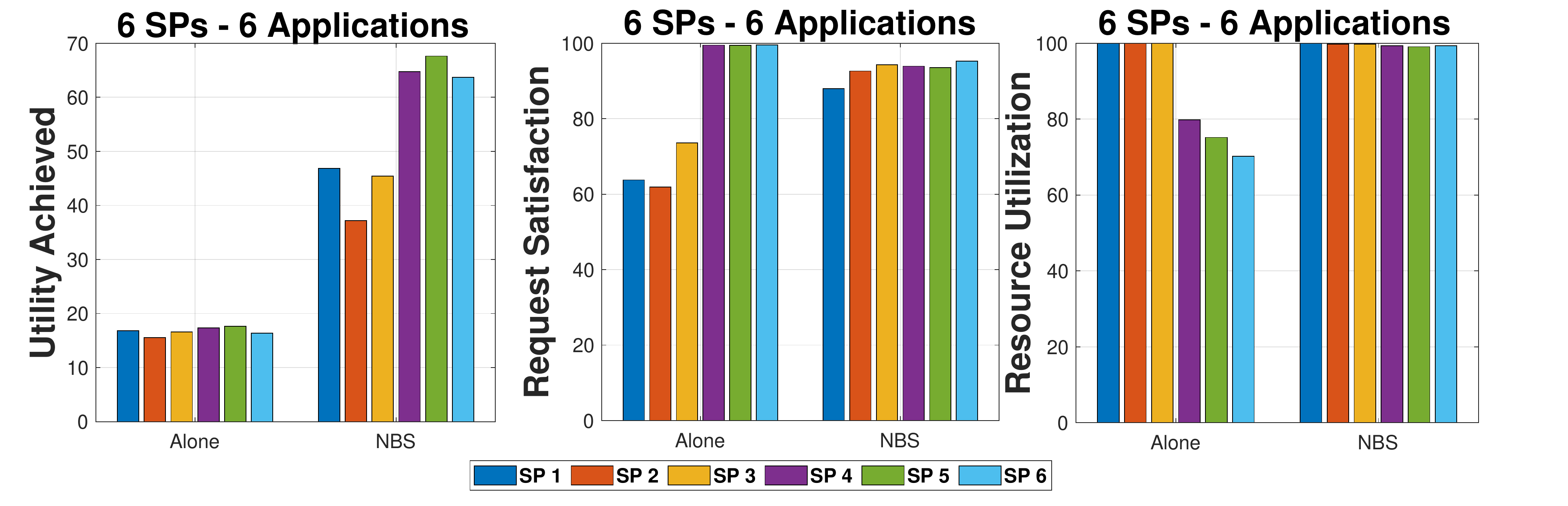}
	\caption{Utility, average request satisfaction and average resource utilization for Setting $4$ when SPs are working alone and using our proposed NBS framework.}
	\protect\label{fig:withNBS66}
\end{figure*}
\begin{figure*}
	\centering
	\includegraphics[width=1.00\textwidth]{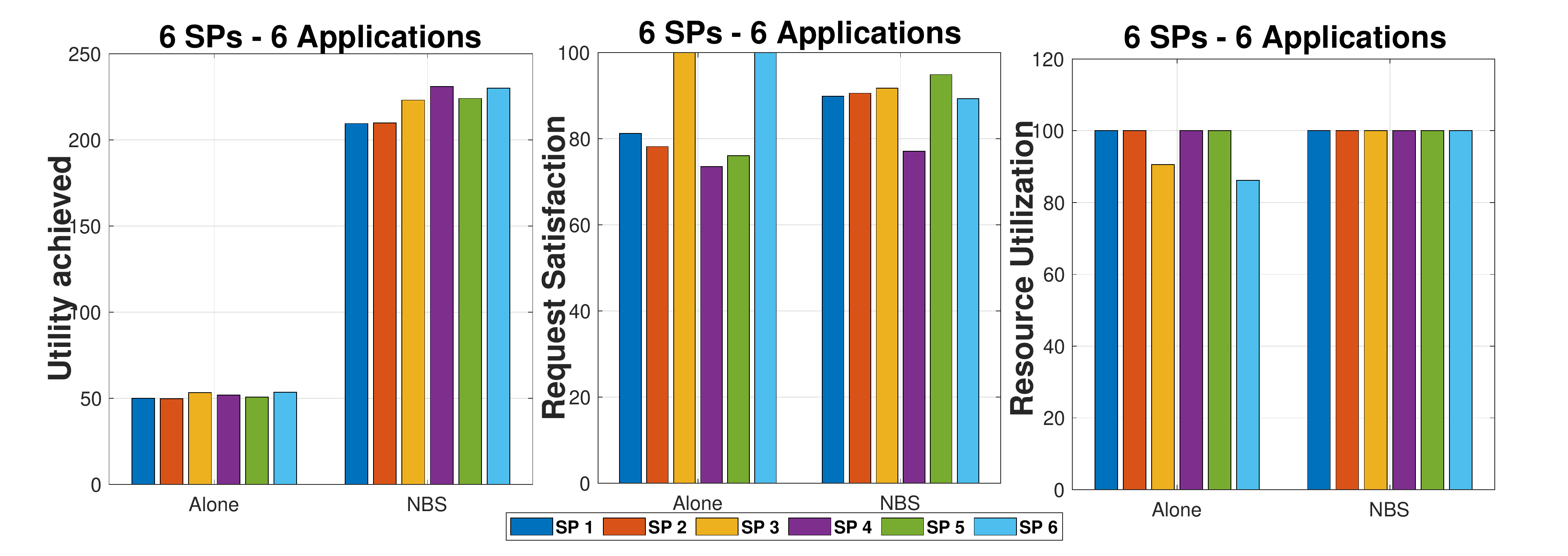}
	\caption{Utility, average request satisfaction and average resource utilization for Setting $5$ when SPs are working alone and using our proposed NBS framework.}
	\protect\label{fig:withNBS7}
\end{figure*}
To highlight the efficacy of our framework, we compare its performance with a setting where SPs  work alone (i.e., no resource sharing among edge SPs). 
In particular, we compare SP utility, the average resource utilization (averaged across all $k$ resources) and the average request satisfaction (averaged across requests of all SPs applications) in Figures \ref{fig:withNBS1}, \ref{fig:withNBS2}, \ref{fig:withNBS66}, \ref{fig:withNBS7} for settings 1, 2, 4 and 5, respectively. 
For the 3 SP settings, when SP $1$  works alone, it has a resource deficit (evident from $100\%$ average resource utilization and  average request satisfaction of less than 60\%  and 80\% in Figures \ref{fig:withNBS1} and \ref{fig:withNBS2}, respectively) whereas SPs $2$ and $3$ have  resource surpluses as indicated by less than $100\%$ resource utilization and $100\%$ request satisfaction. The resource deficit results in a lower utility and request satisfaction for SP $1$. 

On the other hand,  both SPs $2$ and $3$   achieve  higher utilities by satisfying all their applications when working alone. However, by using our resource sharing framework, the utilities of all SPs improve as the framework provides optimal resource sharing.  
For the case with  three applications,  average request satisfaction improves from $86.11\%$ (working alone) to $94.5\%$ (resource sharing) whereas it improves from  $90.9\%$ to $99\%$ for the $20$ application case. It is worth noting  that request satisfaction for native applications of SPs $2$ and $3$ reduce as these SPs allocate their resources to applications of SP $1$ for a higher utility. 
Furthermore,   resource utilizations also increase for the SPs with resource surpluses as they share their resources with the SP with a resource deficit.  Similar results are obtained for settings $4$ and $5$ given in Table \ref{tab:settings}. For setting $4$, the request satisfaction improves from 82.96\% to 92.94\% using our proposed NBS framework, whereas request satisfaction improves in setting $5$ from 84.85\% to 88.94\% using our proposed resource sharing framework. Table \ref{tab:summreqres} summarizes request satisfactions and resource utilization in different settings using our framework and when working alone. 
\begin{figure*}
	\centering
	\includegraphics[width=1.00\textwidth]{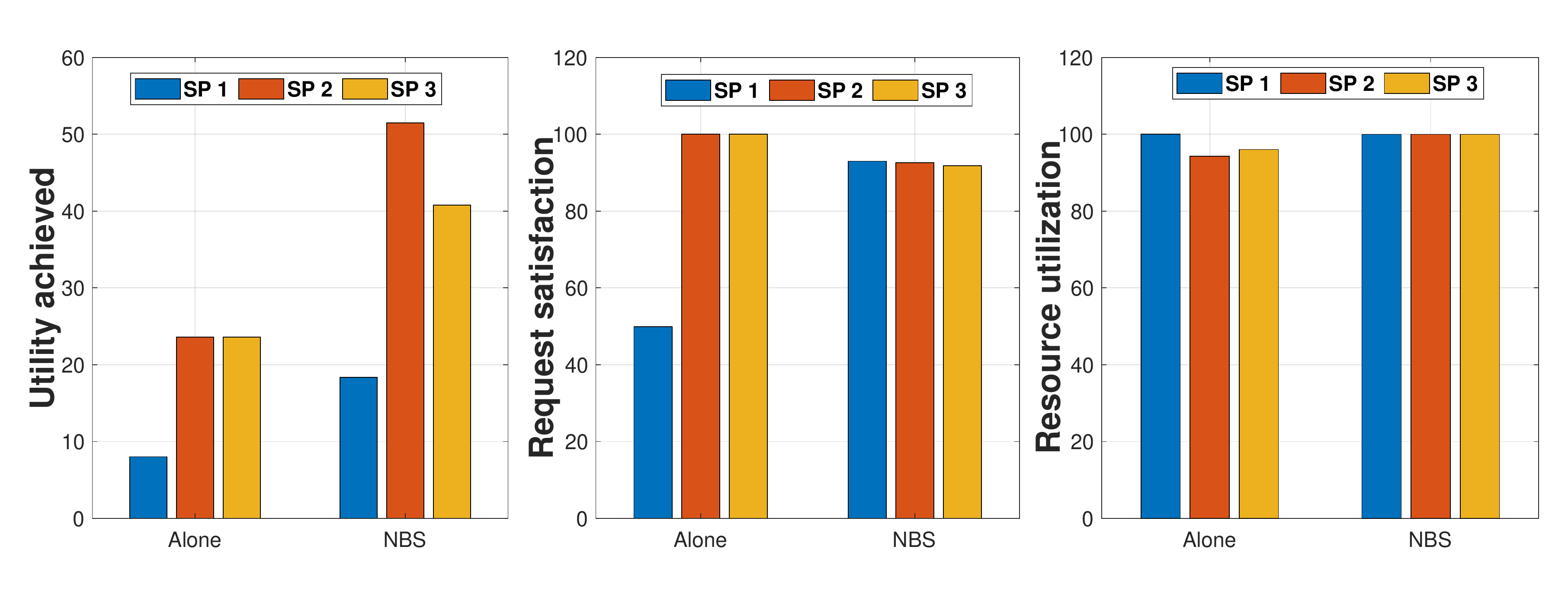}
	\caption{Utility, average request satisfaction and average resource utilization for Setting $3$ when SPs are working alone and using our proposed NBS framework.}
	\protect\label{fig:withNBS6}
	
\end{figure*}
\begin{table}[]
	\centering 
	\caption{Summary of average request satisfaction and resource utilization in different settings with and without the proposed NBS sharing framework.}
	\begin{tabular}{|l|l|l|l|l|}
		\hline
		\multirow{2}{*}{\textbf{Setting}} & \multicolumn{2}{l|}{\textbf{Request Satisfaction(\%)}} & \multicolumn{2}{l|}{\textbf{Resource Utilization(\%)}} \\ \cline{2-5} 
		& \textbf{Alone}            & \textbf{NBS}           & \textbf{Alone}            & \textbf{NBS}           \\ \hline
		\textbf{1}                        & 91.39                     & 96.90                & 93.37                     & 100                    \\ \hline
		\textbf{2}                        & 90.93                     & 99.04                & 94.01                   & 100                    \\ \hline
		\textbf{3}                        & 83.30                   & 92.44               & 96.76                   & 100                    \\ \hline
		\textbf{4}                        & 82.96                     & 92.94                  & 87.54                     & 100                    \\ \hline
		\textbf{5}                        & 84.85                     & 88.94                  & 96.13                     & 100                    \\ \hline
	\end{tabular}
	\label{tab:summreqres}
\end{table}

\subsection{Results for the data traces}
We use trace files from \emph{fastStorage}, \emph{Rnd} \cite{shen2015statistical}, and \emph{materna} \cite{kohne2014federatedcloudsim,kohne2016evaluation}. 
We simulate a setting with three different SPs and randomly extract the normalized resource request information related to the number of CPU cores, the amount of CPU and memory (RAM) for 20 different resource requests  from  \emph{fastStorage}, \emph{Rnd} and \emph{materna} dataset. Since the datasets do not provide the capacities of these service providers, we assign capacities in such a way that fastStorage serves as the SP with resource deficit while the other two have resource surplus. Figure \ref{fig:withNBS6} shows SP utilities,  request satisfaction and resource utilization based on  the  data traces. It is evident that SP utilities  improve  and resource utilizations  increase to satisfy more applications by use of the proposed NBS framework. The average request satisfaction also increases from $83.3\%$ to $92.05\%$.  

\subsection{Measure of Fairness}

NBS is known for its fairness property \cite{han2012game}. In this section, we show the fairness of our proposed NBS based resource sharing framework for different settings given in Table \ref{tab:settings}. In particular, to measure the degree of fairness of the proposed sharing framework, we calculate  \emph{Jain}'s index\footnote{$\frac{1}{|N|}\leq$Jain's index$\leq 1$ where $1$ is the highest value of fairness.} \cite{jain1999throughput,jain1984aquantitative}. Figure \ref{fig:nbsfairness} shows Jain's index for different settings given in Table \ref{tab:settings}. The value of Jain's index is larger than 0.95 in all the settings.  
Especially for scenarios with a large number of applications, these results reveal  that our framework enables  fair sharing and allocation of resources among SPs, as one would expect from the product-based fairness as offered by the NBS.

\begin{figure}
	\includegraphics[width=0.54\textwidth]{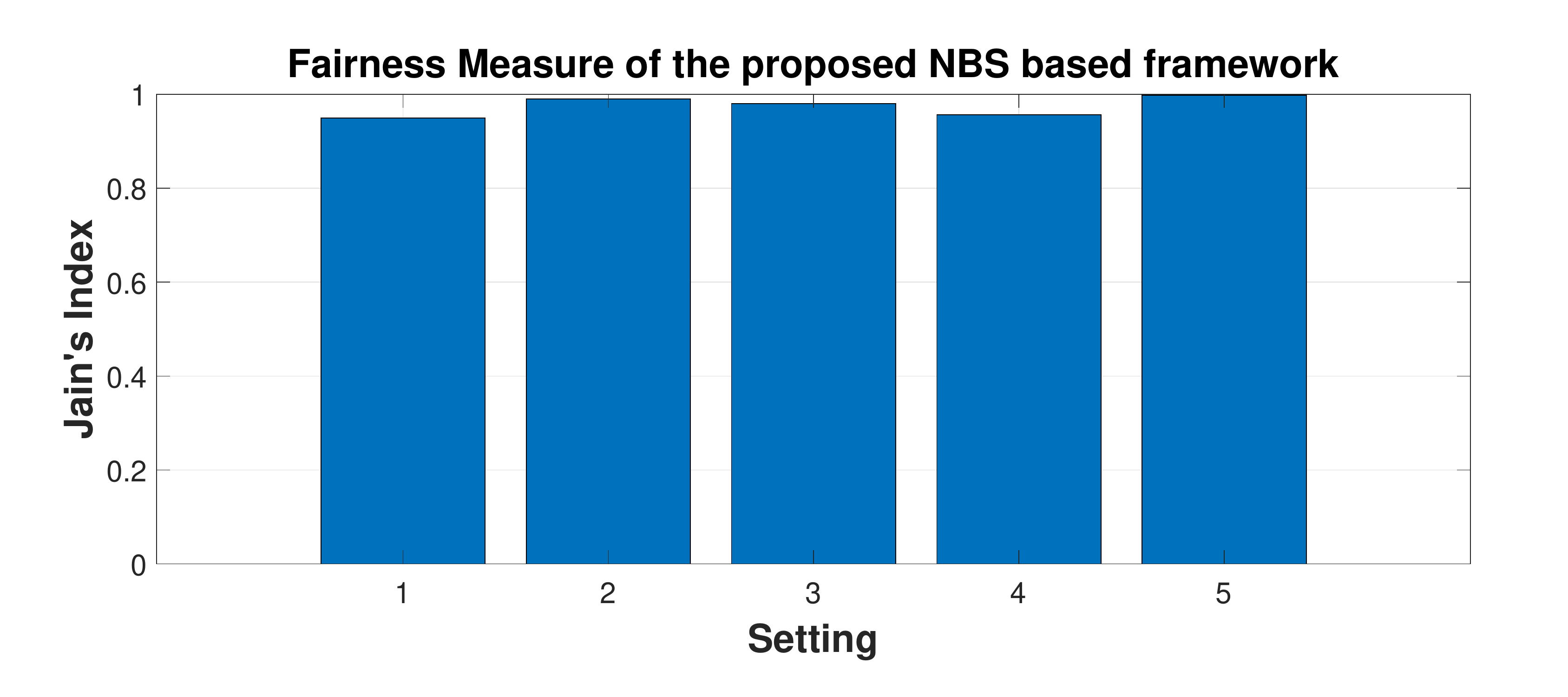}
	\caption{Jain's Index in different settings using our proposed NBS based resource sharing framework.}
	\protect\label{fig:nbsfairness}
\end{figure}
\section{Related Work}\label{sec:related}
\subsection{Resource allocation in Edge Computing}
Jiao et al. \cite{jiao2018social} propose an auction-based resource allocation scheme for edge service providers that provide resources for blockchains. The proposed mechanism maximizes the social welfare and guarantees truthfulness, computational efficiency and individual rationality. He et al. \cite{he2018s} analyze the optimal provisioning of shareable and non-shareable edge resources to different applications.    
Plachy et al. \cite{plachy2016dynamic} consider the mobility problem in mobile edge clouds and propose a novel algorithm for  selecting communication path and VM placement. The proposed approach relies on predicting user movement that helps in VM placement and accordingly selecting the communication path. Nishio et al. \cite{nishio2013service} propose a resource sharing architecture for mobile clouds that relies on service-oriented utility functions and is closest to our work. However, they primarily consider service latency and rely on a centralized framework whereas we present a distributed framework that requires a specific utility function that can be used to model different metrics such as latency, delay, and numerous other objectives. In \cite{zafari2020let}, we modeled resource sharing among mobile edge clouds as a cooperative game. However, the MOO problem in \cite{zafari2020let} is non-convex  that is hard to solve compared to the convex problem solved here. Furthermore, the cooperative game based framework lacks the fairness guaranteed by NBS. 
 Furthermore, in contrast with \cite{jiao2018social,plachy2016dynamic} and \cite{he2018s} that primarily consider resource allocation, our proposed framework deals with resource sharing among ESPs with different objectives.

\begin{table*}[]
	\centering
	\caption{Comparison of our NBS based resource sharing and allocation framework with other NBS based solutions.}
	\begin{tabular}{|c|p{1.0cm}|p{2.1cm}|c|p{5.5cm}|}
		\hline
		\multirow{2}{*}{\textbf{Reference}} & \multirow{2}{*}{\textbf{Objective}} & \multirow{2}{*}{\textbf{Resource Sharing}} & \multicolumn{2}{c|}{\textbf{Distributed Algorithm}}             \\ \cline{4-5} 
		&                                     &                                            & \textbf{Proposed?}      & \multicolumn{1}{c|}{\textbf{Technique}} \\ \hline
		\multicolumn{1}{|l|}{\cite{yaiche2000game}}              & \multicolumn{1}{l|}{Fair Bandwidth allocation}               & \multicolumn{1}{c|}{$\times$}                      & \multicolumn{1}{c|}{$\checkmark$} & Gradient Projection                                   \\ \hline		
		
		\multicolumn{1}{|l|}{\cite{xu2012general}}              & \multicolumn{1}{l|}{Fairness}               & \multicolumn{1}{c|}{$\times$}                      & \multicolumn{1}{c|}{$\checkmark$} & Dual decomposition and sub-gradient method are used. 
		\\ \hline
		
		\multicolumn{1}{|l|}{\cite{hassan2014virtual}}              & \multicolumn{1}{l|}{Cost and Resource Utilization}               & \multicolumn{1}{c|}{$\times$}                      & \multicolumn{1}{c|}{$\times$} & N/A                                     \\ \hline
		
		\multicolumn{1}{|l|}{\cite{he2013toward}}              & \multicolumn{1}{l|}{Cost and User Experience}               & \multicolumn{1}{c|}{$\times$}                      & \multicolumn{1}{c|}{$\times$} & N/A                                     \\ \hline
		
		\multicolumn{1}{|l|}{\cite{feng2012bargaining}}              & \multicolumn{1}{l|}{Resource utilization}               & \multicolumn{1}{c|}{$\times$}                      & \multicolumn{1}{c|}{$\times$} & N/A                                  \\ \hline

		\multicolumn{1}{|l|}{\cite{guo2013cooperative}}              & \multicolumn{1}{l|}{Bandwidth allocation}               & \multicolumn{1}{c|}{$\times$}                      & \multicolumn{1}{c|}{$\checkmark$} & Gradient Projection                                   \\ \hline		
		
		\multicolumn{1}{|l|}{Our Approach}              & \multicolumn{1}{l|}{Service provider utility and user satisfaction}               & \multicolumn{1}{c|}{$\checkmark$}                      & \multicolumn{1}{c|}{$\checkmark$} & Gradient descent based algorithm that works for the class of utilities described in section \ref{sec:assumption}.                                   \\ \hline			
		
	\end{tabular}
	\label{tab:comparison}
\end{table*}
\subsection{NBS based Resource Allocation}
Yaiche et al. \cite{yaiche2000game} use NBS to allocate bandwidth for elastic services in  high speed networks. 
Xu et al. \cite{xu2012general} consider the fairness criteria when allocating resources to different cloud users and make use of the  NBS to guarantee fairness. Using dual composition and sub-gradient method, the authors also develop a distributed algorithm. Hassan et al.  \cite{hassan2014virtual} propose an NBS-based model for cost-effective and dynamic VM allocation with multimedia traffic, and show that it can reduce the cost of running different servers along with maximizing resource utilization and satisfying the QoS requirements. He et al. \cite{he2013toward}   study the optimal deployment of content in a cloud assisted video distribution system. 
Feng et al. \cite{feng2012bargaining} use NBS  for Virtual Machine (VM) migration to maximize the resource utilization in a video streaming data center. 
\par 
In contrast with  \cite{yaiche2000game, feng2012bargaining, guo2013cooperative,he2013toward,hassan2014virtual},  to the best of our knowledge, our  framework is first of its kind that uses NBS for  resource sharing among ESPs  with different utilities (objective functions). We show that resource sharing can improve  utilities  of ESPs and enhance application satisfaction. 
Furthermore, for a particular class of utilities, we have proved that distributed algorithm exists for obtaining NBS for the resource sharing and allocation problem. Table \ref{tab:comparison} summarizes some other solutions proposed in the literature  that use NBS for resource allocation in different systems.   Other distributed algorithms proposed in literature either rely on dual decomposition \cite{xu2012general} or gradient projection \cite{yaiche2000game,guo2013cooperative}. However, most of the functions are not dual decomposable and gradient projection is a computationally expensive approach \cite{bertsekas1997nonlinear}, whereas gradient descent is widely used particularly  in machine and deep learning.

\section{Conclusions}\label{sec:conclusion}
The focus in this paper is to optimally utilize available resources for satisfying a larger number of edge applications and improving the utility of edge service providers. We have shown that although ESPs may have different utilities, they should share
resources to improve their utilities and enhance application request satisfaction. Resource sharing among ESPs has been formulated as a bargaining problem and a resource-sharing framework using Nash Bargaining Solution (NBS) has been proposed, which has also been shown to be beneficial for the ESPs. Since a centralized solution for obtaining the NBS may not always be desirable, the strong duality property has been proved, which has enabled us to develop a distributed algorithm for the NBS. Using synthetic and real-world  data traces, we have demonstrated the effectiveness of the proposed framework.  In particular, our results confirm that ESPs with resource
deficits and surpluses can improve their utilities as well as application satisfaction by sharing resources.



  \section*{Acknowledgments}

This work was supported by the U.S. Army Research Laboratory and the U.K. Ministry of Defence under Agreement Number W911NF-16-3-0001. The views and conclusions contained in this document are those of the authors and should not be interpreted as representing the official policies, either expressed or implied, of the U.S. Army Research Laboratory, the U.S. Government, the U.K. Ministry of Defence or the U.K. Government. The U.S. and U.K. Governments are authorized to reproduce and distribute reprints for Government purposes notwithstanding any copy-right notation hereon.  Faheem Zafari also acknowledges the financial support by EPSRC Centre for Doctoral Training in High Performance Embedded and Distributed Systems (HiPEDS, Grant Reference EP/L016796/1), and Department of Electrical and Electronics Engineering, Imperial College London.

\bibliographystyle{IEEETran}
\bibliography{refs}  

\end{document}